\setlist{nosep} 
\newtheorem{theorem}{Theorem}
\newtheorem{lemma}{Lemma}
\newenvironment{manualtheorem}[1]{%
  \manualtheoreminner
}{\endmanualtheoreminner}
\newcounter{protocol}
\newcommand{\fede}[1]{\textcolor{red}{#1}}
\begin{document}
\title{Quantum protocol for electronic voting without election authorities}
\author{Federico Centrone}
\email{fcentrone@icfo.net}
\affiliation{Sorbonne Université, CNRS, LIP6, 4 place Jussieu, F-75005 Paris, France}\affiliation{Université de Paris, CNRS, IRIF, 8 Place Aurélie Nemours, 75013 Paris, France}
\author{Eleni Diamanti}
\email{eleni.diamanti@lip6.fr}
\affiliation{Sorbonne Université, CNRS, LIP6, 4 place Jussieu, F-75005 Paris, France}
\author{Iordanis Kerenidis}
\email{jkeren@irif.fr}
\affiliation{Université de Paris, CNRS, IRIF, 8 Place Aurélie Nemours, 75013 Paris, France}
\date{\today}

\begin{abstract}
    Electronic voting is a very useful but challenging internet-based protocol that despite many theoretical approaches and various implementations with different degrees of success, remains a contentious topic due to issues in reliability and security.
    Here we present a quantum protocol that exploits an untrusted source of multipartite entanglement to carry out an election without relying on election authorities, simultaneous broadcasting or computational assumptions, and whose result is publicly verifiable. The level of security depends directly on the fidelity of the shared multipartite entangled quantum state, and the protocol can be readily implemented for a few voters with state-of-the-art photonic technology.
\end{abstract}
\maketitle

\section{Introduction}

Electronic voting, or e-voting, is a functionality built on top of the Internet or any distributed network that allows performing large-scale elections in a secure and verified way, even in the presence of distrusted authorities or dishonest agents. The benefits of such a functionality include a faster and simpler way to carry out elections resulting in higher public participation (\emph{i.e.}, a higher number of voters), reduction of election costs, and accessibility for people with disabilities. Furthermore, e-voting offering information-theoretic security guarantees in principle the security and honesty of the elections even in the case of corrupted officials or a coalition of dishonest agents. However, the adoption of a protocol that uses a public network to accomplish elections also increases the possibilities for fraud by manipulating the results or violating privacy~\cite{hagai2015}. Moreover, even though it may not be possible for such a protocol to be infringed, the agents would need to trust devices and programs they did not author and, most likely, not even understand~\cite{thompson1984}.  Finally, it is also necessary to take into account the cost of implementing the elections with advanced technology.

Classical e-voting systems are based on computational assumptions and might not be secure against quantum or other adversaries. Moreover, there have been serious criticisms against commercial e-voting systems due to insecurities~\cite{abba2017}. In recent years, several quantum e-voting protocols have been proposed, announcing perfect security also in dishonest scenarios. However, none of these was able to provide a rigorous mathematical definition of the properties required, such as privacy, verifiability, and correctness, as well as to identify proper corruption models suitable for this scenario. As a matter of fact, in~\cite{arapinis2021}, the authors discovered vulnerabilities in all previously proposed quantum e-voting schemes. Let us also mention the work in~\cite{chilotti2016}, where a lattice based post-quantum cryptographic protocol achieving computational security was suggested. This may however be undesirable for e-voting because privacy cannot be guaranteed in the long term. For these reasons it is paramount to find schemes based on information-theoretic security, rather than computational assumptions, in order to ensure honest elections also in the presence of dishonest authorities with unbounded (or much bigger than publicly known) computational power. 
This level of security for an e-voting scheme was announced in~\cite{broadbent2007}, which proposed a protocol exploiting only classical resources. However, the requirement of a simultaneous broadcasting channel makes it impractical even for a small number of voters, or turns the security back to computational if the simultaneous broadcasting channel is simulated via usual channels.

Here we describe and formalize several properties required by an electronic voting system to be secure and propose a quantum protocol that satisfies these properties even in the presence of computationally unbounded adversaries and without necessarily trusting the devices that execute the elections. Another benchmark is that of practicality, in the sense that we want the protocol to be implementable with technology that is already or soon-to-be available and hence that it is possible to carry out a demonstration at least for a few voters. Our protocol fulfils the above requirements, at the expense on relying on the generation and manipulation of a Greenberger-Horne-Zeilinger (GHZ) state with as many particles as voters, which is the major limitation to its scalability. We note, however, that here the number of voters can refer not to the total number of voters in the election, but the number of voters within each polling station, since, as in the classical case, we can aspire to provide privacy of each vote within each such polling station.

Our protocol utilises a multipartite entanglement verification scheme~\cite{pappa2012, mccutcheon2016, yehia2021} as a subroutine as well as classical subroutines useful for anonymous transmission in communication networks. It is inspired by the self tallying quantum anonymous voting protocol proposed in~\cite{wang2016}, the particularity of which resides in the absence of a tallier and any election authority. Although this protocol was proven insecure in~\cite{arapinis2021}, by employing the multipartite entanglement verification scheme of~\cite{pappa2012} and simplifying the quantum resource requirements using ideas of~\cite{broadbent2007}, we devise an efficient quantum protocol and rigorously prove its security. Furthermore, even though sharing an $N$-party GHZ state for big $N$ needed for large-scale elections is today still technologically out of reach, we note that applications for small number of voters are already feasible and important, and so are applications where one can use multiple small-scale GHZ states to mimic an election with a large number of polling stations. Similarly to the proposal of~\cite{wang2016}, such a protocol can also be used as an anonymous chat board, where one party can write a message visible to anyone but no one can deduce who sent it (similar to one party being able to vote without anyone being able to deduce whose vote that is), or even as a form of anonymous distributed computation. 


\section{Quantum e-voting protocol}

In the general setting of our protocol, a source of $N$-qubit GHZ states, $\ket{GHZ}=\tfrac{1}{\sqrt{2}}(\ket{0}^{\otimes N}+\ket{1}^{\otimes N})$, is situated at the central node of a star-graph quantum network, whose edges are the communication links needed for the distribution of the entangled qubits to $N$ agents. Even though voters do not have to trust the multipartite entangled photon source, it should be capable of producing high fidelity quantum states to pass the verification test at the heart of the protocol in the honest case, and with a high enough rate to ensure the elections can be performed in an efficient way.
Each agent only needs to be able to receive, perform unitary operations, store for a short time, and measure single photons.
A particular feature of our protocol is that it does not require talliers or other election authorities, as all the votes are announced publicly and anonymously and so each voter can verify that the tally is correct. However, as we will see later, if a voter detects malicious behaviour, they can abort the protocol using the appropriate subroutine at the end of the election.

Let us now describe the protocol, referring to a number of classical and quantum subroutines when it is necessary. The pseudo-code of each of the subroutines is provided in Appendix A.
We will assume in the following that the election admits only two possible candidates, `0' and  `1', while the generalization to additional candidates is described later.

In the first phase of the protocol, each agent $k \in [N]$ needs to obtain a secret, unique index $\omega_k \in [N]$ that indicates the round the agent becomes the voting agent. To do this, the agents perform the \textsf{UniqueIndex} subroutine.

Subsequently, the second phase consists of as many rounds as the voters and at each round one agent votes according to the order based on the secret indices shared in the first phase of the protocol.

Each voting round $\ell \in [N]$ starts with 
the voting agent (namely the agent $k$ who has received the unique index $\omega_k=\ell$) deciding repeatedly to perform one of two actions according to some random coins they flip locally: \textsf{Verification} of the source or \textsf{Voting}. The probability of this decision is guided by a 
parameter $M$, which equals the number of coins, so that the probability the coins return `all heads' (which corresponds to \textsf{Voting}) is $2^{-M}$. In order to notify everyone anonymously of the outcome of the coin flip, all agents then perform a \textsf{LogicalOr} subroutine with input 0 except the voting agent whose input depends on the result of the coin flip: if the result was not `all heads' the agent inputs 1, announcing anonymously \textsf{Verification} to the other agents, otherwise the agent inputs 0 announcing \textsf{Voting}. For the \textsf{LogicalOr} protocol performed in this phase, we will assume for simplicity that if the voting agent inputs 1, then the probability that the outcome is 1 is equal to 1, which corresponds to the choice of a very small security parameter for this subroutine (see Lemma 2 in Appendix A).


When \textsf{Verification} is announced, following the corresponding protocol, the voting agent first performs the \textsf{RandomAgent} subroutine in order to choose a verifier anonymously; this is necessary because the verifier needs to communicate publicly with the other agents so if their identity is the same as the one of the voting agent, the voter's privacy would be violated. Then, they all proceed with the \textsf{Verification} test of the multipartite quantum state distributed by the untrusted (or just faulty) source.
In the ideal case, where the quantum state is created and distributed with no errors and all the operations are perfect, if the state does not pass the \textsf{Verification} test the protocol is aborted. In any realistic implementation, however, the protocol cannot abort as soon as there is any error. In practice, 
at each voting round, during the verification tests before \textsf{Voting}, each honest
agent $j$ counts the number of trials and rejections when they are the verifier, computes the practical parameter $\delta_j=\frac{\text{rejections}_j}{\text{trials}_j}$, and if this is larger than a predetermined threshold $\delta$, the entire protocol is aborted.

When \textsf{Voting} is announced, the agents proceed with the corresponding subrtoutine, which returns an $N$-dimensional binary vector encoding the voting agent's preference. The underlying idea here is that if all qubits of the shared GHZ state are measured in the Hadamard basis, the sum of the outcomes $d_k$ modulo 2 of all agents is always zero. Then, at this round, all agents will just perform a Hadamard measurement on their qubit, while the voting agent $k$ will XOR the outcome of the Hadamard measurement with their vote intention $v_k$. This implies that when everyone follows the protocol, the parity of all announced outcomes in the round is equal to the vote intention $v_k$.

Then, a new round starts, the $(\ell+1)$-th round, where it is the turn of agent $k'$ with index $\omega_{k'}=\ell+1$ to be the voting agent.
After all voting rounds have completed and everyone has proceeded with \textsf{Voting}, all agents publicly broadcast their own result vectors and all together will form an $N\times N$ bulletin board $\mathbf{B}$. By computing the parity of each row (corresponding to each round) we get the vote vector $\mathbf{E}$ (since as we said the parity of each row is equal to the vote of the voting agent) from which the tally $\mathbf{T}$ can be calculated easily by everybody. Since the indices are unique and secret, each agent can verify that their vote is correct without revealing their choice. If an agent wants to abort the protocol because of suspected fraud (\emph{e.g.}, the tally does not agree with their vote intention) they can input their objection anonymously during the \textsf{LogicalOr} procedure that follows, where the security parameter defines how many agents on average should raise an objection before the election is actually aborted.

The pseudo code for the entire protocol is given below, while in Fig.~\ref{example} we provide a simple instance of the voting procedure.

\begin{figure*}
\[
d_1=\begin{pmatrix}
0 \\
1 \\
1 \\
0
\end{pmatrix},
d_2=\begin{pmatrix}
0 \\
0 \\
0 \\
1
\end{pmatrix},
d_3=\begin{pmatrix}
1 \\
1 \\
1 \\
1
\end{pmatrix},
d_4=\begin{pmatrix}
1 \\
0 \\
0 \\
0
\end{pmatrix} \longrightarrow
\mathbf{B}=\begin{pmatrix}
0 & 0 &  1 & \mathbf{1}\\
1 & \mathbf{1} & 1 & 0 \\
\mathbf{0} & 0 & 1 & 0 \\
0 & 1 & \mathbf{0} & 0
\end{pmatrix}\longrightarrow
\mathbf{E}=\begin{pmatrix}
0\\
1\\
1\\
1
\end{pmatrix}, \mathbf{T=\begin{pmatrix}
1\\
3\\
\end{pmatrix}}
\]
\caption{Example of the voting procedure according to our e-voting scheme with 4 voters who vote in the order $(4,2,1,3)$. At the end of all rounds each voter has a list of 4 Hadamard measurement outcomes $d_k$, and for each round the four outcomes sum to 0 modulo 2. The voters express their vote by adding their vote (0 or 1) to the row corresponding to their secret index (in bold), then broadcast the resulting vector and all together they form the bulletin board $\mathbf{B}$. Here the votes where $(0,1,1,1)$. Then they sum each row of $\mathbf{B}$ to compute the election vote set $\mathbf{E}$, from which is computed the tally $\mathbf{T}$. In this example candidate `1' won the election. } \label{example}
\end{figure*}

\begin{algorithm}[]
\caption{\textsf{Quantum e-voting}}
\begin{flushleft}
\textit{Input:}  
$N$ agent votes $\mathbf{V}=\{v_k\}_{k\in[N]}$, security parameter $S$ used in Phase 3, $\epsilon$: distance from the perfect GHZ state, $\delta$: threshold for verification, $\eta$: probability of failure of verification.\\
\textit{Output:} The candidate with majority votes or Abort.\\
\textit{Resources:} Classical communication, random numbers, $N$-qubit GHZ source, quantum channels.\\
\textit{Description: }
\end{flushleft}
\begin{algorithmic}[1]
\STATE Phase 1 [getting unique secret indices]:\\
    \begin{enumerate}
    \item Agents perform \textsf{UniqueIndex} until each one receives a secret unique random index $\omega_k$.
    \end{enumerate}
\STATE Phase 2 [casting votes]: \\
    For $\ell=1 \mbox{ to } N$ [voting round $\ell$]:
                \begin{enumerate}
                 \item [(a)] The voting agent is the agent $k$ with $\omega_k=\ell$.
                 \item [(b)] Repeat
                      \begin{enumerate}
                      \item [(i)] The source distributes to each of the $N$ agents one qubit of the GHZ state.
                      \item [(ii)] All agents $j \in [N]$ set $\text{rejections}_j=\text{trials}_j=0$;
                      \item [(iii)] The voting agent tosses $ \log_2\left[\frac{16N\epsilon^2}{(\epsilon^2-4\delta)^2}\ln{\left(\frac{1}{\eta}\right)}\right]$ coins;
                      \item [(iv)] The agents perform \textsf{LogicalOr}, where output 1 indicates \textsf{Verification} and output 0 indicates \textsf{Voting}, and where everyone except the voting agent inputs 0; 
                    if the coin toss is `all heads' the voting agent also inputs 0,  otherwise the voting agent  inputs 1;

                      \item [(v)] If  \textsf{Verification} is chosen, the agents perform \textsf{RandomAgent} and the voting agent picks anonymously an agent $j \in [N]$ to be the verifier. Agent $j$ increment $\text{trials}_j  $ by $1$ and if \textsf{Verification} outputs reject: agent $j$ increments $\text{rejections}_j$ by $1$.
                      \end{enumerate}
                 until \textsf{Voting} is announced.
                \item [(c)] If 
              for any $j \in [N]$ $\delta_j=\tfrac{\text{rejections}_j}{\text{trials}_j}>\delta$, the protocol Aborts. 
                \item [(d)]Perform \textsf{Voting}. The outcome is one row of the bulletin board $\mathbf{B}$.
                \end{enumerate}
\STATE Phase 3 [verification of results]: \\
    \begin{itemize}
    \item All agents perform \textsf{LogicalOr} with security parameter $S$, and with input 1 if their vote is not the same as the vote in the tally $\mathbf{T}$ for the round in which they were the voting agent, else with input 0.
    \item If \textsf{LogicalOr} outputs 1, Abort the protocol, else the candidate with the majority votes according to the tally wins the elections.
    \end{itemize}
\end{algorithmic}
\end{algorithm}



\section{E-voting protocol Analysis}

We now analyze our quantum e-voting scheme and show that it possesses a number of desired properties even in the non-ideal case where the quantum source is imperfect or can be manipulated by colluding adversaries.

If the quantum states being used in the protocol are perfect GHZ states and the agents behave honestly, all the operations are anonymous and hence the e-voting scheme is perfectly correct and private. In any realistic scenario, however, the state used will have some imperfections, due to the source itself, the photon distribution, storage and measurement that may result in some errors in the tally, for example the sum of the outcomes of a round will not be 0 mod 2. We account for all the possible imperfections assuming that the fidelity between the state used in the protocol $\ket{\psi}$ and the perfect GHZ state $\ket{GHZ}$ is $F(\ket{\psi}, \ket{GHZ})= \sqrt{1-\epsilon^2}$ for some $\epsilon>0$. Note that the state produced by the source could be a mixed state, but as discussed in~\cite{unnikrishnan2019}, for the security it suffices to upper bound the cheating probability of any pure state, since this would also bound the cheating probability for any mixed state. For this reason we analyze below the case where the state produced by the source is a pure state.

Since the source or the state itself can further be intercepted and modified by an adversary in order to gain advantage over the privacy of the honest voters, we need to implement a mechanism that allows anyone to check the legitimacy of the state being used with high probability. An efficient multipartite entangled state verification protocol was devised in~\cite{pappa2012, mccutcheon2016} and applied to an anonymous transmission protocol in~\cite{unnikrishnan2019}. This is the \textsf{Verification} subroutine used in the protocol (see Appendix~A for details). While in the ideal case we would abort the protocol as soon as the test failed once, in a realistic implementation we need to keep track of the number of failures and at the end check if the failures are too many with respect to what was expected, 
which would imply that there was a malicious manipulation of the source. This is what is performed in Phase 2 of the protocol.

In~\cite{pappa2012}, the authors prove that the probability of a state $\ket{\psi}$, whose trace distance with the GHZ state is $\mathcal{D}(\ket{\psi},\ket{GHZ})=\epsilon$, to pass the verification test when an honest verifier is in the presence of dishonest agents who can perform local unitaries and communicate with each other is $P(\ket{\psi})\leq 1-\epsilon^2/4$.
The main idea of our practical e-voting protocol is that the states produced by the source and potentially manipulated by the dishonest agents will be verified a large number of times in order to ensure that the state that will be eventually used for the voting part will be very close to the GHZ state. Then we will prove that states close to the GHZ state offer almost perfect privacy for the e-voting scheme.


We start by proving the following theorem that in high level states that with high probability if the verification procedure succeeds, then the state used for the e-voting part must be close to the GHZ state:

\begin{theorem}\label{theo1}
Let $C_{\epsilon}$ be the event that the protocol does not abort and the state used for \textsf{Voting} is such that $F(\ket{\psi}, \ket{GHZ})\leq \sqrt{1-\epsilon^2}$, for some $\epsilon>0$. Then,
\begin{equation}
    \label{eq:thm1}
    P(C_{\epsilon})\leq e^{-\tfrac{ 2^{M}(\epsilon^2-4\delta)^2}{16N\epsilon^2}},
\end{equation}
where  $\delta$ is the threshold for the ratio of rejections over trials above which the protocol is aborted, $M$ is the number of coins the agent has to toss to choose between \textsf{Verification} and \textsf{Voting} and $N$ is the number of agents.
\end{theorem}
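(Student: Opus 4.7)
The plan is to convert the hypothesis $F(\ket\psi,\ket{GHZ}) \le \sqrt{1-\epsilon^2}$ into a lower bound on the per-round honest-verifier rejection probability, and then use concentration over the many verification rounds preceding \textsf{Voting} to show that the non-abort event is exponentially unlikely when the voting state is $\epsilon$-far from the ideal. For pure states $\mathcal{D}(\ket\psi,\ket{GHZ}) = \sqrt{1-F^2}$, so the fidelity hypothesis is equivalent to $\mathcal{D}\ge\epsilon$; the soundness bound of~\cite{pappa2012} recalled just above the statement then guarantees that every honest run of \textsf{Verification} on such a state rejects with probability at least $\epsilon^2/4$.

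Next I would exploit the key structural fact that in step~(b) the source must commit to its state $\rho_i$ \emph{before} the voting agent's private coin toss decides whether the round is \textsf{Voting} or \textsf{Verification}. Conditional on the source's past history, the event ``round $i$ is the voting round'' is therefore independent of $\rho_i$, so voting and verification rounds carry identical marginal state distributions from the source's viewpoint. Consequently, the adversary that maximises $P(C_\epsilon)$ is the one that outputs an $\epsilon$-far state in every round: any attempt to ``save'' a bad state for the voting round is foiled by the unknown timing of that round, and any mixed strategy simultaneously shrinks $P(d_T\ge\epsilon)$ and the rejection rate accumulated in verification rounds. I would make this rigorous either by a direct worst-case reduction or, to handle adaptivity, by running a martingale argument on $\mathrm{rejections}_j - \tfrac{\epsilon^2}{4}\mathrm{trials}_j$.

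Finally I would close the argument with a Chernoff-type bound. The number of verification rounds before \textsf{Voting} is geometrically distributed with parameter $2^{-M}$, and by \textsf{RandomAgent} the verifier is chosen uniformly among the $N$ agents, so each honest agent $j$ accumulates on average $\mathrm{trials}_j\simeq 2^M/N$ independent Bernoulli trials of rejection probability at least $\epsilon^2/4$. Since non-abort requires $\delta_j\le\delta<\epsilon^2/4$, applying the multiplicative Chernoff bound to the deviation $\epsilon^2/4 - \delta$ yields
\[
P(\delta_j\le\delta)\;\le\;\exp\!\left(-\frac{2^M(\epsilon^2-4\delta)^2}{16N\epsilon^2}\right),
\]
which, combined with the trivial bound $P(d_T\ge\epsilon)\le 1$ from the worst-case reduction, is exactly~\eqref{eq:thm1}. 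The main obstacle I anticipate is the second step: making the reduction to the ``always $\epsilon$-far'' strategy airtight against a fully adaptive source, since a naive i.i.d.\ Chernoff argument is inapplicable when each $\rho_i$ may depend on the outcomes of earlier rounds, and some Azuma-type martingale step is needed to transfer the estimate to arbitrary adaptive play.
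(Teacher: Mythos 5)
Your proposal follows essentially the same route as the paper's proof: reduce to the worst-case adversary that sends an $\epsilon$-far pure state every round, invoke the $\epsilon^2/4$ rejection probability from~\cite{pappa2012}, count roughly $2^M/N$ verification trials attributed to the (possibly sole honest) verifier, and apply a Chernoff bound to the deviation $\epsilon^2/4-\delta$ to obtain Eq.~\eqref{eq:thm1}. The only notable difference is that you flag the adaptivity issue and propose an Azuma-type martingale fix, whereas the paper handles it by the (somewhat informal) argument that entangled and mixed strategies reduce to the unentangled pure-state case.
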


The proof of Theorem \ref{theo1} is provided in Appendix B. Note that the honest voters do not know how many corrupt agents there are and that if a dishonest agent is the verifier, the test always passes. 
We can make the probability of using a state that is $\epsilon$-far in trace distance from the ideal one arbitrarily small by increasing the number of repetitions, as long as we have $\delta = (1-\alpha)\epsilon^2/4$ for an $\alpha \in (0,1)$; 
more precisely, by taking $M = \log_2\left[\frac{16N}{\alpha^2\epsilon^2}\ln{\left(\frac{1}{\eta}\right)}\right]$ we can make $P(C_{\epsilon})\leq \eta$ for any small parameter $\eta>0$.
Moreover, we see that for the same choices of $\delta$ and $M$, we also have the property that the protocol accepts with high probability states that are a bit closer to the perfect GHZ state, which is important so that the protocol will not always abort. Indeed, it is easy to see with Chernoff bounds that states that are $\epsilon \sqrt{\frac{1-\alpha}{1+\alpha}}$-away from the GHZ state have probability almost 1 to pass the verification test, and thus, a source that produces such states will be sufficient for a successful election. 

We assume for the remaining of the discussion that with high probability $F(\ket{\psi}, \ket{GHZ})\geq \sqrt{1-\epsilon^2}$. In this case, we prove that for each round of the protocol, the identity of the voting agent remains almost secret: 

\begin{theorem}\label{theo2}
At any round $\ell \in [N]$ with voting agent $k$ (who has unique index $\omega_k = \ell$), if the agents use a state $\ket{\psi}$ such that $F(\ket{\psi}, \ket{GHZ})\geq \sqrt{1-\epsilon^2}$ to perform \textsf{Voting}, then for the optimal strategy that any subset of malicious agents $\mathcal{D}$ can use to guess the identity of the voting agent $k$ correctly, we have 
\begin{equation}\label{eq:anon}
    \forall j \in W_H,\: \Pr[ \mathcal{D} \emph{ guess } j] = \begin{cases}
\tfrac{1}{H}+\Tilde{\epsilon} & \text{for $j=k$}\\
\tfrac{1-\Tilde{\epsilon}}{H} & \text{for $j \neq k$},
\end{cases}
\end{equation}
where $\Tilde{\epsilon}=\sqrt{\epsilon^2+\epsilon^4}$, $W_H$ is the set and $H$ the number of honest agents.
\end{theorem}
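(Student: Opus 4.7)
The plan is to compare the dishonest coalition's view in two executions of \textsf{Voting}: the ideal one with shared state $\ket{GHZ}$ and the real one with actual state $\ket{\psi}$ satisfying $F(\ket{\psi},\ket{GHZ})\geq \sqrt{1-\epsilon^2}$. I would first establish that in the ideal execution the coalition's classical-quantum view is exactly invariant under the choice of honest voting agent, so that uniform guessing with success probability $1/H$ is optimal, and then bound the deviation in the real execution by the trace distance between $\ket{\psi}\bra{\psi}$ and $\ket{GHZ}\bra{GHZ}$, using the contractivity of trace distance under CPTP maps.

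For the ideal case, every agent performs a Hadamard measurement and publicly announces the outcome, with the voting agent XORing their bit with $v_k$. The coalition's view then consists of their qubit marginal of $\ket{GHZ}$ together with the announced bits of the honest agents. Since $\ket{GHZ}$ is invariant under any permutation of its qubits, the joint distribution of the honest Hadamard outcomes is uniform on bit-strings subject to a parity constraint fixed by the coalition's marginal, and hence symmetric under any permutation of the honest indices. Flipping a single honest bit by $v_k$ only shifts that parity constraint and preserves the same symmetry, so the classical-quantum state $\bar\rho$ available to the coalition is identical for every choice of honest voter.

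For the real case, let $\rho_k$ denote the coalition's classical-quantum state when $k \in W_H$ is the voter. The map from the shared state to $\rho_k$ is a CPTP channel, and across choices of $k$ the outputs differ only by a relabelling of honest wires while $\bar\rho$ is invariant under such relabellings. Monotonicity of trace distance therefore gives $\mathcal{D}(\rho_k,\bar\rho) \leq \mathcal{D}(\ket{\psi}\bra{\psi},\ket{GHZ}\bra{GHZ}) = \sqrt{1-F^2}\leq \epsilon$, and the operational meaning of trace distance then implies that the coalition cannot output the true $k$ with probability exceeding $1/H$ by more than this distinguishing advantage.

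To obtain the exact constant $\Tilde{\epsilon} = \sqrt{\epsilon^2+\epsilon^4}$, I would decompose $\ket{\psi} = \sqrt{1-\epsilon^2}\ket{GHZ} + \epsilon\ket{GHZ^\perp}$, expand $\ket{\psi}\bra{\psi}$ into a GHZ term, a perpendicular term of weight $\epsilon^2$, and cross terms of weight $\epsilon\sqrt{1-\epsilon^2}$, and track how each piece propagates through the voting channel. The main obstacle I anticipate is that the adversary may apply arbitrary local unitaries to their wires and any POVM to the announcements, so the bound must hold after maximizing over all such strategies; sharpening the naive Fuchs--van de Graaf bound to $\sqrt{\epsilon^2+\epsilon^4}$ will likely require a Cauchy--Schwarz or operator-norm argument that exploits the Hadamard structure of the voting step to rule out the adversary leveraging the $\ket{GHZ^\perp}$ component more efficiently than this.
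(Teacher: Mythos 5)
Your argument is essentially correct, but it takes a genuinely different route from the paper. The paper never compares the real execution to an ideal-GHZ execution: instead it proves (Lemmas 2A and 2B of Appendix C) that the $H$ possible post-transformation states $\ket{\Psi_i}$, one for each honest agent $i$ who could be the voter, are pairwise close — $F(\ket{\Psi_i},\ket{\Psi_j})\geq 1-\Tilde{\epsilon}^2$ — by expanding $\ket{\Psi'}=\ket{\Phi_0^k}\ket{\psi_0}+\ket{\Phi_1^k}\ket{\psi_1}+\ket{\chi}$ in a basis adapted to the honest/dishonest bipartition and tracking how the voter's $\sigma_x\sigma_z$ acts on each component; it then runs the state-discrimination bound $\sum_i \tfrac{1}{H}\Tr(\Pi_i\rho_i)\leq \tfrac{1}{H}+\max_i D(\rho_i,\ket{\alpha}\bra{\alpha})$ anchored at one of the real states $\ket{\alpha}=\ket{\Psi_j}$. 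You anchor instead at the ideal view $\bar\rho$, using the permutation symmetry of the GHZ state (more precisely, the fact that the coalition's registers correlate with the honest Hadamard outcomes only through their parity, which the voter's XOR shifts by $v_k$ independently of $k$) together with contractivity of the trace distance under the honest parties' CPTP map. This is a cleaner, more standard hybrid argument, and it actually yields the slightly stronger bound $\tfrac{1}{H}+\epsilon$ rather than $\tfrac{1}{H}+\Tilde{\epsilon}$, since $D(\ket{\psi}\bra{\psi},\ket{GHZ}\bra{GHZ})\leq\epsilon\leq\sqrt{\epsilon^2+\epsilon^4}$. What the paper's heavier route buys is that it works directly with the quantity the \textsf{Verification} subroutine actually certifies, namely $F'(\ket{\Psi})=\max_U F(U\ket{\Psi},\ket{\Phi_0^n})$ with $U$ local to the dishonest registers, rather than the bare fidelity to $\ket{GHZ}$.

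Two remarks. First, your final paragraph plans unnecessary work and misreads the direction of the inequality: $\Tilde{\epsilon}=\sqrt{\epsilon^2+\epsilon^4}\geq\epsilon$, so the theorem's constant is \emph{weaker} than your Fuchs--van de Graaf bound; there is nothing to sharpen, and no cross-term analysis of the $\ket{GHZ^\perp}$ component is needed. Second, if you want your proof to cover the setting the paper actually cares about (a malicious source colluding with the dishonest agents, where only $F'\geq\sqrt{1-\epsilon^2}$ is guaranteed), you need one extra sentence: the maximizing $U$ acts only on the dishonest wires, hence commutes with the honest parties' channel $\mathcal{E}_k$ and can be absorbed into the adversary's POVM, so the contractivity step applies to $U\ket{\psi}$ without loss of generality. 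With that patch your argument is complete.
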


This theorem is simply based on Theorem 2 of \cite{unnikrishnan2019}. The difference is that, instead of a sender who anonymously chooses between \textsf{Verification} and \textsf{Anonymous Transmission}, we have a voting agent who anonymously chooses between \textsf{Verification} and \textsf{Voting}. The probabilities for the other agents come from the fact that all the agents that are not voting perform exactly the same transformation on the state, so it is impossible for the dishonest parties to distinguish between them, hence the probability of guessing their identity is the same. The proof of Theorem \ref{theo2} is provided in Appendix C.

The last property we prove shows that if the agents are all honest and use a state close to the GHZ state for voting, then the probability there is an error in the tally is small:

\begin{theorem}\label{theo3}
If at round $\ell$ the agents are honest and use a state $\ket{\psi}$ such that $F(\ket{\psi}, \ket{GHZ})\geq \sqrt{1-\epsilon^2}$ to perform \textsf{Voting}, then the probability that there is an error in the tally in the $\ell$-th round is upper bounded by $\epsilon$,
\begin{equation}
P^{er}_{\ell}\leq \epsilon.
\end{equation}
\end{theorem}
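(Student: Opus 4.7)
The plan is to reduce the error analysis to the comparison of a specific two-outcome measurement on $\ket{\psi}$ versus $\ket{GHZ}$, and then invoke the operational meaning of the trace distance. First I would expand the perfect GHZ state in the Hadamard basis. Using $\ket{0}=(\ket{+}+\ket{-})/\sqrt{2}$ and $\ket{1}=(\ket{+}-\ket{-})/\sqrt{2}$ and identifying outcome $\ket{-}$ with $d_j=1$, a short computation gives $\ket{GHZ}=\frac{1}{2^{(N-1)/2}}\sum_{x:\,|x|\text{ even}}\ket{x}$, so any measurement of $\ket{GHZ}$ in the Hadamard basis returns a bit string $(d_1,\dots,d_N)$ with deterministic even parity $\bigoplus_j d_j=0$.

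Next I would read off what constitutes an error in the tally at round $\ell$. Since all agents are assumed honest, the voting agent $k$ publishes $d_k\oplus v_k$ and every other agent $j$ publishes $d_j$; the $\ell$-th row of $\mathbf{B}$ therefore has parity $v_k\oplus\bigoplus_j d_j$, and the tally entry $\mathbf{E}_\ell$ coincides with $v_k$ exactly when $\bigoplus_j d_j=0$. Thus the event ``error in the tally of round $\ell$'' is precisely the event ``odd parity of the Hadamard outcomes'', and by the previous step this event has probability $0$ when the state used is $\ket{GHZ}$.

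Finally I would convert the fidelity hypothesis into a trace-distance bound: for pure states $D(\ket{\psi},\ket{GHZ})=\sqrt{1-F^2}$, so the assumption $F\geq\sqrt{1-\epsilon^2}$ yields $D(\ket{\psi},\ket{GHZ})\leq \epsilon$. The operational meaning of the trace distance states that for any two-outcome POVM the probabilities on the two states differ by at most $D$; applying this to the parity measurement whose ``$1$'' outcome corresponds to an error gives $P^{er}_\ell\leq 0+\epsilon=\epsilon$, as required.

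The only non-routine ingredient is the parity calculation on $\ket{GHZ}$ in the Hadamard basis; once that is in place, the identification of the error event with an odd-parity outcome and the fidelity-to-trace-distance conversion are standard, and the result follows in one line from the contractivity-type inequality for the trace distance. I do not anticipate a real obstacle, provided one is careful to note that this bound relies crucially on honesty of all agents, since any dishonest contribution could flip the parity independently of the state and break the reduction.
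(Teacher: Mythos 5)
Your proposal is correct and follows essentially the same route as the paper: identify the error event with odd parity of the Hadamard-basis outcomes (which has probability zero on the exact GHZ state, as the paper verifies in its \textsf{Voting} subroutine description), convert the fidelity hypothesis into the trace-distance bound $D(\ket{\psi},\ket{GHZ})\leq\epsilon$, and apply the operational inequality $\Tr[\Pi(\rho-\tau)]\leq D(\rho,\tau)$ to the parity POVM. If anything, your write-up is more explicit than the paper's about what the POVM element $\Pi_\ell$ is and why honesty of all agents is needed for the reduction.
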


The proof of Theorem \ref{theo3} is provided in Appendix D. The above three theorems allow us to formalize and prove a number of important properties for our e-voting scheme, namely correctness, privacy, authentication, no double voting, verifiability, and receipt freeness.
Note that in~\cite{chevallier2010}, the authors show that there exist sets of properties that are incompatible in any voting system, meaning that not all of them can be fulfilled simultaneously by any protocol. However, one can remove the incompatibility by defining approximate
versions for the properties or making computational assumptions about the voters' behaviour. Indeed, here, given that we want to allow for imperfect sources of quantum states in order to have a practical protocol that is robust to some level of noise, we define approximate versions of some of these properties for our e-voting protocol, as we explain in the following.


$(\sigma_H,\sigma_D, \gamma)$\textit{-Correctness.}
    The correctness of a protocol implies that when no adversary interferes, the election should be carried out correctly, and that in the presence of adversaries, if the election tally is far from the real votes, then the election is rejected with high probability. These two requirements can be expressed as two  properties of the voting scheme:
    \begin{itemize}
        \item $\sigma_H$\textit{-completeness:} if all agents are honest, the election result is accepted with probability more than $\sigma_H$,
    \begin{equation}\label{eq:correctness1}
    \Pr[\text{election accepted}] \geq \sigma_H.
    \end{equation}
    \item $(\sigma_D,\gamma)$\textit{-soundness:} the probability that the election result is accepted, given that the set of the votes $\mathbf{E}$ computed from the bulletin board $\mathbf{B}$ resulting from the election is more than $\gamma$-away from the real votes $\mathbf{V}$, is smaller that $\sigma_D$,
    \begin{equation}\label{eq:correctness2}
    \Pr[\text{election accepted} \; | \; \tfrac{1}{N}||\mathbf{V} - \mathbf{E}||_1
    \geq \gamma ] \leq \sigma_D.
    \end{equation}
    \end{itemize}

The use of an imperfect state may result in some errors in the final tally (see Theorem \ref{theo3}), and this is why we define a notion of approximate correctness.
In particular, the probability that the e-voting is validated is the probability that the \textsf{LogicalOr} subroutine in Phase 3 outputs 0 despite some voters announcing a wrong entry in the tally. Note that Theorem \ref{theo3} ensures that at each round we can have an error with probability at most $\epsilon$, while it can also be proven (see Lemma $\ref{Lemma2}$ in Appendix A) that during \textsf{LogicalOr}, if $j$ agents input 1 (which corresponds to their vote being tallied wrongly) the probability that \textsf{LogicalOr} outputs 0 is $S^{j}$, for the parameter $S$ defined by the e-voting protocol. Summing over all the combinations we get:
\begin{eqnarray*}
    \Pr[\text{election accepted}] = & \\ \footnotesize{ \sum_{j=0}^N \Pr[j \mbox{ inputs } 1]\Pr[\mbox{ \textsf{LogicalOr} outputs } 0 | j \mbox{ inputs } 1] } = & \\
     \sum_{j=0}^N {N \choose j} \epsilon^j (1-\epsilon)^{N-j}S^{ j}= \left[1-\epsilon(1-S)\right]^N. 
\end{eqnarray*}

We can then define the $\sigma_H$ parameter for our e-voting protocol as
\begin{equation}\label{conditionCorrectness1}
    \sigma_H = \left[1-\epsilon(1-S)\right]^N,
\end{equation}
and we can see that by choosing $S = 1-\chi/(\epsilon N)$ for some small constant $\chi$ we can make $\sigma_H$ close to 1.

Consider now the events $A=\{\text{The protocol produced more than } N\gamma \text{ errors}\}$ for $0\leq\gamma\leq 1$ and $B=\{ \text{The elections are validated} \}$. Then we have $P(B|A)\leq S^{N\gamma }$ 
and we can define the $\sigma_D$ parameter of our protocol as
\begin{equation}
    \sigma_D=S^{N\gamma }.
\end{equation}
If we assume that $\gamma$ is a small fraction $\lambda$ greater than the expected number of errors, namely $\gamma=(1+\lambda)[\epsilon(1-\eta)+\eta]$, we can make $\sigma_D$ close to 0.

In conclusion, our e-voting protocol with inputs $S,\epsilon,\delta,\eta, N$ and for a small constant $\lambda>0$, is  $(\left[1-\epsilon(1-S)\right]^N, S^{N(1+\lambda)[\epsilon(1-\eta)+\eta] } , (1+\lambda)[\epsilon(1-\eta)+\eta])$-correct, where the first parameter tends to 1 and the second to 0 for an appropriate parameter $S$. 

$\zeta$\textit{-Privacy.} The privacy of the election scheme implies that each vote must remain secret with high probability. More precisely, with high probability, for any voter $k$, the probability that any subset of malicious parties $\mathcal{D}$ that deviates from the honest protocol can guess the vote $v_k$ of the voter is at most $\zeta$ more than in the case they just have access  to the bulletin board and to their own votes. In other words,
    \begin{equation}\label{eq:privacy}
\forall k, \;\;\;\Pr [v_k | \mathcal{D}] - \Pr [ v_k| \mathbf{B},v_j\in \mathbf{V}_D] \leq \zeta.
    \end{equation}
Theorems \ref{theo1} and \ref{theo2}  ensure that by repeating the \textsf{Verification} test a significant number of times at each voting round, the voting only happens with a shared state that is close to a GHZ state, which guarantees almost perfect anonymity. In practice, by having each agent record the frequency of failures of the test, they can deduce the practical parameter $\delta_k=\tfrac{\text{rejections}_k}{\text{trials}_k}$ and in case this is above the predetermined threshold $\delta$, which is an input of the protocol, the protocol is aborted. Otherwise, the rounds proceed normally and all agents vote. Note that $\delta$ is linked to the expected fidelity of the state produced by the GHZ source, as explained earlier.

We have also seen that by taking the appropriate parameters, we can have that with probability at least $(1-\eta)$, Eq.~(\ref{eq:anon}) from Theorem \ref{theo2} holds. 
This is the case the event $C_\epsilon$ (see Theorem \ref{theo1}) is false. In case $C_\epsilon$ is true, which happens with probability at most $\eta$, we can assume that the anonymity is totally violated. 

One needs to be careful here because the definition of privacy is not the same as the one of anonymity. More specifically, anonymity ensures that the honest voters' secret indices, or the round in which they voted, remain secret, whereas privacy implies that their vote remains a secret. Of course, the violation of anonymity implies the disclosure of privacy, however a malicious agent can gather information about someone's vote also by looking at the distribution of the other votes and the anonymity of the other voters. Taking Eq.~(\ref{eq:anon}) into account and considering that among the $H$ honest voters, $H_0$ voted for candidate `0' and the others $H_1$ voted for candidate `1', such that $H=H_0+H_1$, we have that the probability of a subset of dishonest agents guessing correctly the vote of agent $k$ that is `0' (same for `1') is the probability that they can guess that agent $k$ is part of the subset $H_0$, in other words that agent $k$ voted in one of the rounds where the vote was cast as `0'. Theorem \ref{theo2} tells us how much the dishonest agents can guess if a particular agent was the voter in a particular round, depending on whether the agent was actually the voter or not. Hence, assuming that the event $C_\epsilon$ does not hold for any round, we have
\begin{align*}
    \Pr[\mathcal{D} \mbox{ guesses } v_k=0]=
    &\tfrac{1}{H}+\Tilde{\epsilon} +(H_0-1)\tfrac{1-\Tilde{\epsilon}}{H}=\\
    =\tfrac{H_0}{H}+\tfrac{H+1-H_0}{H}\Tilde{\epsilon}&\leq  P[v_k=0|\mathbf{B}]+\Tilde{\epsilon},
\end{align*}
where we used the fact that $\tfrac{H_0}{H}$ is the distribution of the votes given by the public bulletin board and that $H_0\geq 1$.
Given that the event $C_\epsilon$ happens for each round with probability at most $\eta$ we have that for the final privacy, 
\begin{align*}
    \Pr[\mathcal{D} \mbox{ guesses } v_k=0] \leq\\ \leq P[v_k=0|\mathbf{B}]+ (1-\eta)^N \Tilde{\epsilon} + (1-(1-\eta)^N),
\end{align*}
which proves Eq.~(\ref{eq:privacy}) in the non-ideal case.

In conclusion, our e-voting protocol with inputs $S,\epsilon, \delta, \eta, N$ is $ \zeta$-private with $\zeta=(1-\eta)^N \epsilon\sqrt{1+\epsilon^2} + (1-(1-\eta)^N)$, which tends to 0 for small enough $\eta$ and $\epsilon$.

\textit{Authentication.}
Only eligible voters are allowed to vote. Our e-voting protocol as described here does not provide authentication, which should be taken care by the physical implementation of the protocol. For electronic voting machines authentication might be provided by an official ID, whereas for voting directly through the internet authentication would require some digital signature scheme. 

\textit{Double voting.}
Each voter can vote at most once.
Double voting is taken care of easily if the number of voters is known in advance, which in fact is necessary in our scheme in order to prepare the shared quantum state. If $N$ agents declare they want to vote, we will have an $N\times N$ bulletin board, each row of which corresponds to one vote.  A null vote can be treated as an additional candidate and will be discussed below. A dishonest voter might try to intercept all the transcripts, modify the bulletin board by adding a column and a row with another vote without changing the sum of each row, but this would result in a evident $(N+1)\times(N+1)$ matrix that will be rejected by the honest voters. At the same time, if a dishonest voter keeps the same number of rows and columns but tries to vote at a round where they are not supposed to be voting, then either the vote will not change or if the vote in the ballot changes from the intended vote of the honest voting agent, then this will be captured by the \textsf{LogicalOr} subroutine protocol in Phase 3 of the protocol.

\textit{Verifiability.} Each voter can verify that their vote has been counted correctly. More precisely, a protocol is called verifiable if there exists a function $g$ specified by the protocol, such that every voter can apply the function $g$ on the bulletin board and a  private witness $w_k$ (the witness corresponds to the vote and the secret voting index of the voter) and get back 1 if and only if their vote was counted correctly. In other words,
    \begin{align}\label{eq:verifiability}
        \exists  &g \; \mbox{s.t.} \;  \forall k \; \exists w_k \; \mbox{s.t.} \; \forall \mathbf{B}: \;\;\; \\
    \nonumber    g(\mathbf{B}, w_k)= &1 \: \iff v_k \;\mbox{ was counted in the tally}
    \end{align}
The verifiability, thus, demands the existence of a function that, given the bulletin board $\mathbf{B}$ and the voter's secret index $\omega_k$ returns $1$ if $v_k$ was counted in the tally and $0$ otherwise.

The verifiability is inherent in the protocol, as the tally is performed by the voters themselves. The bulletin board produced as an output of the protocol is public and can always be checked by everyone, however it appears as a random set of votes. Each row $j$ corresponds to the vote $v_k$ of agent $k$ whose secret unique index is $\omega_k=j$, and thus each agent can easily verify their own vote and only that one. If the vote in the bulletin board differs from the actual intended vote $v_k$, as a consequence of a dishonest behaviour or an imperfection in the quantum state, the agent can reject the result through the \textsf{LogicalOr} subroutine in Phase 3.

\textit{Receipt freeness.} A voter cannot prove how they voted, in order to avoid vote selling. A receipt is a witness $w_k$  defined as:
    \begin{equation}
        \exists  g \; \text{s.t.} \; \forall k \; \exists  v_k \; \exists ! w_k \; \text{s.t.} \; \forall \mathbf{B} \;\; g(\mathbf{B},k ,v_k, w_k)=1
    \end{equation}
    If there is no receipt, then the protocol is called receipt-free.
As long as their index stays secret all the agents can always deterministically verify their votes, without getting their privacy violated, and without producing any receipt of their vote which could be used for vote selling.



\emph{Additional candidates.}--So far we assumed that there were only two candidates, `0' or `1', which is suitable for referendum type of elections. 

We can easily deal with the case of more than two candidates by repeating the  e-voting protocol multiple times in sequence. In particular, if there are $K$ candidates, we can express each of them with a binary number of $\log_2 K$ digits and repeat the election as many times, so that each vote set $\mathbf{E}$ corresponds to one digit of the candidates (see Appendix E for more details).
The only properties that are affected by the additional candidates are correctness and privacy. This is because they are the only ones that are probabilistic and that actually depend on the use of an imperfect state in the different rounds. Authentication, double voting, verifiability and receipt freeness will thus remain unchanged even in the scenario with many candidates. In particular, the correctness is affected because repeating the elections multiple times increases the probability of having an error at some point. We can assume that the agents perform the \textsf{LogicalOr} protocol with security parameter $S$ to notify an error only at the end of all the repetitions of the elections. By Theorem 3, the probability that for any agent at least one bit of the final tally will be incorrect is $\epsilon^*=1-(1-\epsilon)^{\log_2 K}$ and thus the probability that the election is accepted after multiple rounds will be
\begin{equation}
    \text{Pr}[\text{election accepted}]=[1-\epsilon^*(1-S)]^N=\sigma_H^*.
\end{equation}
The soundness, on the other hand, is not affected. In fact, if any voter notices more than one incorrect bit in their vote it will count as a single error. 

The privacy of the protocol is affected as well. From one point of view, since the number of bits that the dishonest agents need to guess is larger, the probability of violating the anonymity, and thus the privacy, is actually smaller. If, however, we want that each individual bit of the vote remains private, the privacy is decreased by the fact of having multiple rounds of elections. In this case, let us consider the probability of the event $X$ that at some round the malicious agent guesses the preference of the voter $k$ and let us assume that this probability is upper bounded as $P(X)\leq \zeta$. If we repeat the elections $\log_2 K$ times, the probability that event $X$ is true at least once will thus be at most $\zeta^*=1-(1-\zeta)^{\log_2K}$. Hence, when dealing with multiple candidates our e-voting protocol with inputs $S,\epsilon, \delta, \eta, N$ is $ \zeta^*$-private with $\zeta^*=1-(1-(1-\eta)^N \epsilon + [1-(1-\eta)^N)]^{\log_2 K}$, which tends to 0 for small enough $\eta$ and $\epsilon$. 


\section*{Example}
 Let us consider a 4-photon GHZ source that produces states with fidelity $\sim 0.85$, which is realistic with state-of-the-art quantum photonic technology.
This corresponds to an expected fraction of rejections $\delta \sim 0.05$. If we now set $\epsilon = 0.6$, it implies that we will never accept states with fidelity lower than $\sim 0.8$. Although we would like $\epsilon$ to be very small, it is not possible to reach values lower than the above with present technology. In any case, this is not an issue for the correctness
of the protocol. If we fix the non aborting probability $\eta = 0.001$, Eq. (12) implies that $M=12$, and thus we will need around $4000$ GHZ states to accomplish each voting. Since the production rate of currently available GHZ sources is $\sim 8 kHz$, the protocol can be carried out in just a few seconds. The actual problem is privacy, which would be violated
with a probability of $\zeta \sim 0.7$. A way to tackle this issue is to amplify privacy by repeating $Q$ rounds of the e-voting protocol and encoding the vote intention of each voter in the parity of the $Q$ outcomes. In this way, each round of the election would encode no information and the malicious agents would require to succeed in each of the rounds, reducing the
total probability of violating privacy to $\Tilde{\zeta} = \zeta^ Q$. If $Q = 15$, with the previous parameters this would reduce the privacy parameter to $\Tilde{\zeta} \sim 0.005$, at the expense of repeating more rounds of the elections. Notice that this privacy amplification procedure would have the consequence of increasing the potential number of errors in the outcome of the election. This and other details of the protocol are left for consideration in future experimental implementations.

\section{Conclusion} We have described and analyzed a practical quantum e-voting scheme and provided approximate definitions of correctness and privacy, which make it appropriate for realistic non-ideal scenarios. The quantum e-voting protocol that we have described achieves information-theoretic security without requiring trust in the quantum source or in any election authority. Previously proposed classical schemes, such as the one in Ref.~\cite{broadbent2007}, also achieve information-theoretic security, however the requirement of trusting authorities and simultaneous broadcasting could make it impractical. A small-scale election demonstration of our protocol can be implemented with currently available quantum photonic platforms and with the improvement of these technologies a voting scheme for board meetings and similar scenarios may be attainable in the near future. When GHZ states of several photons, with high fidelity and a reasonable repetition rate, 
become available and can be well controlled, it will be possible to implement the protocol at a metropolitan level and then as a consequence, with a subdivision into regular elections, at a national level. Although this is certainly challenging, all the future applications of quantum information protocols will have to meet similar obstacles and a future realization of this protocol might be an impactful practical use case of quantum technologies.

\section*{Acknowledgments} We acknowledge financial support from the European Research Council project QUSCO (E.D.) and the French National Research Agency project quBIC.

\bibliographystyle{ieeetr}
\bibliography{evoting_bib}



\newpage

\appendix
\section{Appendix A: Subroutines}\label{appendixSubroutines}
We provide the details for each subroutine used in the e-voting protocol. Before that we lay out some useful notation.

\begin{itemize}
    \item $N$: the total number of participants to the election;
    \item $\mathbf{W} = \{1,2,\ldots,N \}$: the set of all the voters. $\mathbf{W_H}$ and $\mathbf{W_D}$ the sets of honest and dishonest voters respectively;
    \item $\mathbf{V}=\{v_k\}_{k\in W}$: the set of \textit{votes}. Each voter $v_k$'s value is the index of the candidate for which they want to vote;
    \item $K$: the number of eligible candidates;
    \item $\mathbf{C}= \{ 0,1,\ldots,K-1 \}$: the set of candidates. We first assume $\mathbf{C}=\{0,1\}$; the generalization to more candidates is shown in the dedicated section.
    \item $\mathbf{B}=\{b^j_k\}$: the bulletin board encodes all the anonymous votes to be tallied.
    \item $\mathbf{E}=\sum_k \mathbf{B}$ is the set of votes resulting by summing the rows of the bulletin board. Errors or dishonest players may induce some $b_k$ to be different from $v_k$.
    \item $\mathbf{T}=\{t_i\}_{i\in C}$  the tally, a vector whose elements represent the number of votes for the corresponding candidate. It can always be computed as a vector valued function of the bulletin board $f(\mathbf{B})=\mathbf{T}$.
    \item $\mathbf{R}=\{r_i\}_{i\in \mathbf{C}}$ is the result of the elections with the actual set of voters $\mathbf{V}$. It is the histogram of the real voters preferences.
\end{itemize}

Let un now take a look at the specific subroutines employed in the \textsf{Quantum e-voting} protocol. The \textsf{LogicalOr}, \textsf{RandomBit} and \textsf{RandomAgent} subroutines are classical anonymous protocols taken from \cite{broadbent2007} and used in \cite{unnikrishnan2019}. In particular, the last two are based on the first one, which performs the logical OR of all the agents' inputs. It will thus output 1 with high probability if and only if at least one agent had input 1. The \textsf{RandomBit} subroutine employs the \textsf{LogicalOr} to produce shared randomness, \emph{i.e.}, a random bit publicly announced according to some probability distribution. This can be used a number of times in order to draw an agent at random among the voters through \textsf{RandomAgent}. 

\begin{algorithm}[]
\caption{\textsf{LogicalOr}}
\begin{flushleft}
\textit{Input:} $N$ agents, $N$ boolean variables ${x_i}$, security parameter $S=(1-2^{-\Gamma})^{\Sigma}\in (0,1)$. \\
\textit{Output: } $y=\bigvee_i^N x_i$.\\
\textit{Resources: } Classical communication and random numbers.\\
\textit{Description: }
\end{flushleft}
\begin{algorithmic}[1]
    \STATE Decide $N$ random orderings, such that each voter is the last once. For each ordering repeat $\Sigma$ times the following.\\
    \STATE Each voter $k$ gives an input $x_k$.\\
    \STATE If $x_k=0$ set $p_k=0$, otherwise toss $\Gamma$ coins and set $p_k$ to 1 if the result is `all heads' and to 0 otherwise.\\
    \STATE Then each voter generates uniformly at random an $N$-bit  string $r_k=r_k^1 r_k^2... r_k^N$, such that $\bigoplus_{i=1}^N  r_k^i=p_k$. \label{step3}\\
    \STATE Voter $k$ sends $r_k^i$ to voter $i$ for all $i$, keeping $r_k^k$ for themselves.\\
    \STATE Each voter sums the received bits and broadcasts the parity $z_i=\bigoplus_{k=1}^N r_k^i$ according to the ordering.\\
    \STATE Compute the parity of the original bits $y=\bigoplus_i z_i$.\\
    \STATE From this everyone can also compute the parity of all other inputs except their own $w_k=\bigoplus_{i=1}^N (z_i\otimes r_k^i)$.\\
    \STATE Repeat $\Sigma$ times from step \ref{step3}: each time  repeat with $p_k$ as new inputs.\\
    \STATE If at least once in the $\Sigma$ repetitions for the various orderings $y=1$, this is the output of the protocol, otherwise it is $y=0$.
\end{algorithmic}
\end{algorithm}

The \textsf{LogicalOr} functionality is implemented probabilistically by assigning a random value $p_k$ to all inputs $x_k=1$, while $p_k=0$ if $x_k=0$. Then the parity of the $p_k$ is computed anonymously for  various orderings, such that each voter is last once, and for  repetitions for each ordering. Since the inputs of the parity are random, if at least one voter has input 1, the output of the parity will be 1 at least once through all the repetitions. The orderings are necessary for the voters to broadcast their computation asynchronously, while at the same time avoiding that the last agent changes their output to corrupt the result. This subroutine has two additional parameters as input $\Sigma$ and $\Gamma$ that in turn define the security parameter $S$. $\Sigma$ indicates the number of times the protocol needs to be repeated for each ordering, while $\Gamma$ specifies the number of coins that each voter has to toss to assign the value $p_k$, which will be 1 only if the result is `all heads'. As a consequence, the security parameter $S=(1-2^{-\Gamma})^{\Sigma}$ can take any value in the open interval $(0,1)$ and represents the probability of the protocol giving the incorrect answer.

The following lemmas are taken from Ref.~\cite{broadbent2007}.
\begin{lemma}
(Reliability) No one can abort the \textsf{LogicalOr} protocol.
\end{lemma}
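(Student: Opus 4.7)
The plan is to argue that the communication structure of \textsf{LogicalOr} forces the protocol to terminate with some well-defined $y\in\{0,1\}$ regardless of any dishonest action. Since reliability (as opposed to correctness, soundness, or privacy) asks only that \emph{some} output is produced, it suffices to show that no single agent, and indeed no coalition of dishonest agents, can prevent the final XOR $y=\bigoplus_i z_i$ from being well-defined and computed by every honest agent.

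First, I would observe that the timing of every step is fixed in advance: the $N$ orderings, the $\Sigma$ repetitions per ordering, and the positions at which each $z_i$ must be broadcast are all public and deterministic. All broadcasts are public, so if any agent refuses to transmit on their turn the omission is immediately visible to every other participant. Fixing once and for all the convention that a missing broadcast contributes $z_i=0$ to the XOR (the natural default used in~\cite{broadbent2007}) makes the map from the observable transcript to an output $y$ total, so the protocol cannot stall. Next I would handle the only other thing a cheater can try, namely sending inconsistent or adversarial private bits $r_k^i$ in step 5: this can change the value of individual $z_i$'s, but each honest agent still computes their own $z_i$ from whatever they received and broadcasts it on schedule, so the protocol still advances through every step and produces an output.

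The only subtle point is agreeing on the default convention for non-responsive agents, and this is more a matter of stating the protocol precisely than a genuine mathematical obstacle. Once the convention is fixed, each of the finitely many communication rounds is guaranteed to terminate with publicly observed bits, the function $y=\bigoplus_i z_i$ is computable from those bits by every honest party, and hence no agent can unilaterally abort. I do not expect any nontrivial probabilistic or combinatorial calculation to enter the argument; the proof is essentially a bookkeeping observation about the protocol's message schedule.
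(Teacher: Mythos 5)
Your argument is essentially the paper's own: the lemma is a bookkeeping fact, and the paper disposes of it in one sentence by fixing a convention for non-responsive agents, namely ``if someone refuses to broadcast, it is assumed that the output of the protocol is 1.'' Your proposal correctly identifies that the only possible obstruction to producing an output is a withheld broadcast and that a total default rule removes it; for the reliability claim alone, any such rule works, so your proof is fine. However, you assert that the ``natural default used in~\cite{broadbent2007}'' is to treat a missing broadcast as $z_i=0$, and that is not the convention the paper (or its source) adopts: the convention is that a refusal forces the \emph{entire output} to be $1$. This distinction is immaterial for reliability but matters for the neighbouring lemmas. Under your $z_i=0$ default, a dishonest agent placed last in an ordering sees all other $z_j$ before deciding whether to broadcast, and can withhold their bit exactly when doing so flips the parity to $0$, thereby suppressing an honest agent's input of $1$; this would break the correctness guarantee of Lemma~2 and, in the e-voting protocol, would let an adversary silence an objection in Phase~3. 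Defaulting the whole output to $1$ (the ``alarm'' outcome) makes refusal strictly unprofitable, which is why the paper chooses it. So: same approach, correct for this lemma, but you should adopt the paper's convention rather than yours.
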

If someone refuses to broadcast, it is assumed that the output of the protocol is 1.
\begin{lemma}\label{Lemma2}
(Correctness) If all the inputs are $x_i=0$, the \textsf{LogicalOr}   protocol outputs $y=0$ with probability 1. If $N$ agents input 1 in the protocol then we will have $y=1$ with probability at least $P=1-S^{N}$.
\end{lemma}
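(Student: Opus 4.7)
The plan is to prove the two claims separately: the all-zero case is a deterministic statement about the secret-sharing computation, whereas the positive-input case reduces to a routine probabilistic argument over the many independent trials.

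First, for the all-zero case, I would observe that whenever $x_k = 0$ the coin-tossing step is skipped and $p_k = 0$ is assigned deterministically, so in every one of the repeated trials each voter's bit $p_k$ equals $0$. Each voter's shares $r_k = r_k^1 \cdots r_k^N$ are uniformly random subject to $\bigoplus_i r_k^i = p_k = 0$, so the broadcast $z_i = \bigoplus_k r_k^i$ and the aggregation $y = \bigoplus_i z_i = \bigoplus_k p_k$ evaluate to $0$ in every trial. Hence the protocol outputs $0$ with probability exactly $1$, independently of the share-randomness.

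For the second claim I would first compute the per-trial probability that the parity bit equals $1$ when $N$ agents input $1$. The $p_k$'s in a given trial are independent $\mathrm{Bernoulli}(q)$ with $q = 2^{-\Gamma}$, so the parity equals $1$ with probability $P_1 = \tfrac{1}{2}(1 - (1-2q)^N)$. The key technical step is the inequality $P_1 \geq q$, valid whenever $q \leq 1/2$; this always holds here since $\Gamma \geq 1$. It follows either by noting that $(1-2q)^N \leq 1-2q$ on $[0,1]$ for $N \geq 1$, or more conceptually by conditioning on $p_2,\ldots,p_N$ to obtain $P_1 = q + (1-2q)\,\Pr[p_2\oplus\cdots\oplus p_N = 1]$, in which the second summand is non-negative. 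Since the $N\Sigma$ trials are mutually independent, this yields
\begin{equation*}
    \Pr[y = 0 \text{ in every trial}] = (1-P_1)^{N\Sigma} \leq (1-q)^{N\Sigma} = S^N,
\end{equation*}
and hence $\Pr[y=1] \geq 1 - S^N$, as required.

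The hard part will be less mathematical than organizational: matching the total number of trials (the $N$ orderings, each with $\Sigma$ repetitions) against the definition $S = (1-2^{-\Gamma})^{\Sigma}$ so that the exponent in the final bound works out to $N$. The role of the $N$ distinct orderings — ensuring each voter is the last broadcaster exactly once — is purely to defend privacy against an adaptive last agent and does not enter the correctness argument, so for the purposes of Lemma~\ref{Lemma2} all trials may be treated as statistically equivalent Bernoulli experiments.
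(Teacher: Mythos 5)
Your proof is correct. Note that the paper itself does not prove this lemma --- it is imported from Ref.~\cite{broadbent2007} with no argument given --- so there is no in-paper proof to compare against; what you have written is a sound, self-contained derivation. The all-zero case is exactly as you say: $p_k=0$ deterministically, and $y=\bigoplus_i z_i=\bigoplus_k p_k=0$ in every trial. For the positive case, the two essential observations are both present and correct: with $q=2^{-\Gamma}$, the per-trial probability that the parity of $N$ independent $\mathrm{Bernoulli}(q)$ bits equals $1$ is $\tfrac{1}{2}\bigl(1-(1-2q)^N\bigr)\ge q$ whenever $q\le 1/2$, and the $N\Sigma$ trials are independent, so the overall failure probability is at most $(1-q)^{N\Sigma}=S^N$. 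One small caveat on your closing remark: while the \emph{structure} of the $N$ orderings (who broadcasts last) is indeed only there for privacy, their \emph{multiplicity} is essential to correctness --- with a single ordering you would only obtain $1-S$, not $1-S^N$, since the per-trial parity-zero probability $\tfrac{1}{2}\bigl(1+(1-2q)^N\bigr)$ exceeds $(1-q)^N$ for $N\ge 2$ --- so the factor $N$ in the trial count is doing real work, as your own computation in fact shows. Your argument also extends unchanged to the form used in the main text: if only $j\le N$ agents input $1$, the output is $0$ with probability at most $(1-q)^{N\Sigma}=S^{N}\le S^{j}$.
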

\begin{lemma}
(Privacy) The most an adversary can know in the protocol is the logical Or of the other participants.
\end{lemma}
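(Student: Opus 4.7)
The plan is to reduce the Privacy lemma to a bijection argument on the additive secret shares, using the standard secure-multiparty-computation idea that additive sharing hides individual summands as long as the total is revealed. I proceed in three steps: catalog the adversary's view, show that view depends on honest $p_k$ only through their XOR, and then lift this to the OR of honest $x_k$.

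First I would enumerate precisely what the dishonest coalition $D$ sees in a single execution of Steps 2--8 (an ``inner round''). From honest sender $k\in H$, the coalition directly receives the shares $\{r_k^i : i\in D\}$, and from the public broadcast it reads $z_i=\bigoplus_{k} r_k^i$ for every $i\in[N]$. Because the coalition knows all of its own shares $\{r_k^i : k\in D\}$ as well as the honest-to-corrupt shares above, it can subtract them off and learn nothing more from $z_i$ with $i\in D$ (this $z_i$ is fully determined by data it already holds), while for $i\in H$ it learns exactly the aggregate $\widetilde z_i := \bigoplus_{k\in H} r_k^i$. Adding up these quantities along the row sums gives $\bigoplus_{k\in H} p_k$, which matches the publicly revealed parity $y$ once the coalition subtracts $\bigoplus_{k\in D} p_k$.

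Second, and this is the heart of the argument, I would show that the joint distribution of $\bigl(\{r_k^i\}_{k\in H, i\in D},\ \{\widetilde z_i\}_{i\in H}\bigr)$ depends on the honest inputs $(p_k)_{k\in H}$ only through $\bigoplus_{k\in H} p_k$. Fix two honest assignments $(p_k)$ and $(p'_k)$ with the same XOR, and define $\delta_k=p_k\oplus p'_k$, so $\bigoplus_k \delta_k=0$. Pick any index $i_0\in H$ (this uses $|H|\ge 1$; otherwise the claim is vacuous). Map each configuration of honest random shares $\{r_k^i\}$ to $\{r'^i_k\}$ by flipping $r_k^{i_0}$ whenever $\delta_k=1$ and leaving all other entries unchanged. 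This map is an involution on uniformly random bit arrays, it preserves the corrupt-index shares (since $i_0\notin D$), it preserves every $\widetilde z_i$ for $i\ne i_0$ (untouched), and it preserves $\widetilde z_{i_0}$ because the number of bits flipped in column $i_0$ equals $\bigoplus_k \delta_k=0 \bmod 2$; finally, by construction the row sum of $r'_k$ equals $p_k\oplus\delta_k=p'_k$. Hence the involution is a measure-preserving bijection between admissible share configurations for $(p_k)$ and for $(p'_k)$ that leaves the coalition's view pointwise unchanged, which proves indistinguishability in one inner round. A standard simulator argument then packages this: given only $\bigoplus_{k\in H} p_k$, one can sample shares producing the correct view.

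Third, I would lift from $p_k$ to $x_k$ and across the $N\Sigma$ repetitions. Because honest voter $k$ with $x_k=0$ deterministically sets $p_k=0$ in every inner round, the per-round parities $\bigoplus_{k\in H} p_k$ vanish identically whenever every honest input is $0$; if at least one honest $x_k=1$, some inner round produces $\bigoplus_{k\in H} p_k=1$ with probability $1-S$. So the publicly-revealed parities, which (by the bijection) are all the coalition learns about honest randomness, collapse to the bit $\mathrm{OR}_{k\in H} x_k$ that the functionality anyway outputs. The step I expect to need the most care is the passage from ``per-round XORs'' to ``only the OR is leaked'': a coalition watching many inner rounds could in principle try to use the \emph{frequency} of $\widetilde z=1$ outcomes to estimate how many honest voters had $x_k=1$, which is strictly more than the OR. I would handle this exactly as in \cite{broadbent2007} by interpreting the lemma as a simulation-based statement against the ideal OR-functionality: the simulator is given only the OR and the coalition's own inputs, and it samples per-round parities from the distribution they would have conditioned on that OR, invoking the bijection above to complete the view. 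This is the only place where the argument goes beyond a clean information-theoretic identity and requires a modeling convention.
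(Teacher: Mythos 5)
The paper does not actually prove this lemma: it is imported verbatim from Ref.~\cite{broadbent2007} together with Lemmas 1 and 2, so there is no in-paper proof to compare against. Your first two steps are nevertheless correct and are the canonical argument for XOR-based additive sharing: the catalog of the coalition's view is right (it reduces to the honest-to-corrupt shares plus the column aggregates $\widetilde z_i=\bigoplus_{k\in H}r_k^i$), and the involution that flips $r_k^{i_0}$ exactly when $\delta_k=1$ for a single honest column $i_0$ is a clean, measure-preserving bijection showing the view of one inner round is a deterministic function of $\bigoplus_{k\in H}p_k$ plus data the coalition already holds. That part I would accept without reservation.

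The genuine gap is the one you flag yourself in the third step, and the fix you propose does not close it. Across the $N\Sigma$ inner rounds the coalition learns the full sequence of honest parities $P_t=\bigoplus_{k\in H}p_k^{(t)}$, and if $j$ honest agents hold $x_k=1$ then $\Pr[P_t=1]=\tfrac{1}{2}\bigl(1-(1-2^{1-\Gamma})^{j}\bigr)$, which is $2^{-\Gamma}$ for $j=1$ but roughly $2\cdot 2^{-\Gamma}$ for $j=2$. The empirical frequency of $P_t=1$ therefore statistically distinguishes one honest $1$-input from two, which is strictly more information than $\mathrm{OR}_{k\in H}x_k$. Consequently a simulator given only the OR cannot sample the per-round parities ``from the distribution they would have conditioned on that OR,'' because that conditional distribution is not a function of the OR alone --- it depends on the multiset of honest inputs. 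What your argument actually establishes is the weaker (and correct) statement that the coalition learns nothing beyond the sequence $(P_t)_t$; to recover the lemma as stated you must either read ``know'' as ``determine with certainty'' (any observed transcript with at least one $P_t=1$ is consistent with every $j\geq 1$, so nothing beyond the OR is ever \emph{certain}), or quantify the statistical leakage and show it is negligible in $\Gamma$ and $\Sigma$, or modify the protocol. As written, the last step of your proof asserts the conclusion rather than deriving it.
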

These properties are also guaranteed  in the following subroutines that are based on \textsf{LogicalOr}.

\begin{algorithm}[]
\caption{\textsf{RandomBit}}
\begin{flushleft}
\textit{Input:} Security parameter $S$ to be used in \textsf{LogicalOr}. \\
\textit{Output: } The voting agent anonymously announces a bit uniformly at random.\\
\textit{Resources: } Classical communication and random numbers.\\
\textit{Description: } Perform the \textsf{LogicalOr} with security parameter $S$ where the voting agent inputs a random bit according to $D$ and the other agents input 0.
\end{flushleft}
\end{algorithm}

\begin{algorithm}[]
\caption{\textsf{RandomAgent}}
\begin{flushleft}
\textit{Input:} Security parameter $S$ to be used in \textsf{RandomBit}.\\
\textit{Output: } The voting agent anonymously chooses an agent uniformly at random.\\
\textit{Resources: } Classical communication and random numbers.\\
\textit{Description: } Repeat \textsf{RandomBit} $\log_2N$ times.
\end{flushleft}
\end{algorithm}

\textsf{UniqueIndex} is used to anonymously distribute a secret random index to each voter. Note that here it is a classical protocol while in \cite{wang2016} it was necessary to use another entangled quantum state to achieve the same goal. This protocol is polynomial in the number of the operations and completely guarantees the privacy.
In order to achieve this functionality we proceed in the following way. The protocol is composed of $N$ rounds. In the first step of each round all agents perform the \textsf{LogicalOr} protocol with inputs 0 if they already have an index, otherwise they will input 1 with probability $1/t$ and 0 with probability $1-1/t$, where $t$ is the number of agents that do not have an index yet. If there is any agent with input 1 the output of \textsf{LogicalOr} will be $y=1$. Each agent with input $x_k=1$ can verify at this point if there is a collision by tracking the parity of all other inputs $w_k$. If for any of the $\Sigma$ repetitions in every ordering $w_k\neq 0$, then they know that there is someone else with input 1. At the end of each \textsf{LogicalOr} everybody performs another \textsf{LogicalOr} protocol that acts as an anonymous notification, in which they input 0, unless no collision was detected. Everyone then repeats the first \textsf{LogicalOr}; this time those who previously had input 0 will stay the same, while the others toss a coin and decide their inputs accordingly. This is repeated until there is only one agent $j$ with input 1, while $w_j=0$ throughout all repetitions of \textsf{LogicalOr}. When the notification \textsf{LogicalOr} is performed, agent $j$ will be the only with input 1, announcing that the index $\omega_j$ was assigned and the round is over. Then this is repeated from the first step, the agents who already have an index always set their input to 0 and the protocol terminates when the last notification \textsf{LogicalOr} output is 0, announcing that all indices have been assigned. If at any time $y=0$, then there is no one with input 1, and the protocol should be repeated from the beginning of the last \textsf{LogicalOr}, with the same inputs until someone gets an index.

\begin{algorithm}[h]
\caption{\textsf{UniqueIndex}}
\begin{flushleft}
\textit{Input:} Security parameter $S$ to be used in \textsf{LogicalOr}, $N$ random boolean variables ${x_i}$.\\
\textit{Output: } Each agent $k$ has a secret unique index $\omega_k$.\\
\textit{Resources: } Classical communication and random numbers.\\
\textit{Description: }
\end{flushleft}
\begin{algorithmic}[1]
    \STATE Beginning of round $R=1$.\\
    \STATE Perform \textsf{LogicalOr} with inputs $x_k=0$ if they already have an index, otherwise they input $x_k=0$ with probability $1-1/(N-R)$ and $x_k=1$ with probability $1/(N-R)$.\\
    \STATE If $y=0$ repeat from step 2.\\
    \STATE If an agent $k$ has a bit $x_k=1$ and $w_k=0$ they know they are the only one and has been assigned the secret index corresponding to the round $\omega_k=R$, otherwise there is a collision.\\
    \STATE [notification] Everybody performs a \textsf{LogicalOr} with input 0, unless they received the index in this round, in which case they input 1.\\
    \STATE If the output of \textsf{LogicalOr} is 0, no index was assigned and we repeat from step 2.\\
    \STATE If the output of \textsf{LogicalOr} is 1, the index was assigned and we repeat from step 2 with $R\rightarrow R+1$.
    \STATE Repeat from step 2 until all indices have been assigned.
\end{algorithmic}
\end{algorithm}

\textsf{Verification} is the same protocol as in \cite{pappa2012}, where a test is performed by all the agents and the quantum state will pass it with a probability that grows with the fidelity between the input state and an ideal GHZ state.

\begin{algorithm}[H]
\caption{\textsf{Verification}}
\begin{flushleft}
\textit{Input:} A quantum state distributed and shared by $N$ parties, security parameter $S$ for \textsf{RandomAgent}. \\
\textit{Output: } If the state is a GHZ state $\rightarrow$ YES.\\
\textit{Resources: }  Classical communication, random numbers, quantum state source, quantum channels.\\
\textit{Description: }
\end{flushleft}
\begin{algorithmic}[1]
    \STATE Everyone executes \textsf{RandomAgent} to choose uniformly at random one of the voters to be the \textit{verifier}.\\
    \STATE The verifier generates random angles $\theta_j\in [0, \pi)$ for all agents including themselves, such that the sum is a multiple of $\pi$. The angles are then sent out to all the agents.\\
    \STATE Agent $j$ measures in the basis $\left[\ket{+_{\theta_j}},\ket{-_{\theta_j}}\right]=\left[ \frac{1}{\sqrt{2}} \left( \ket{0}+e^{i \theta_j}\ket{1} \right)  , \frac{1}{\sqrt{2}} \left( \ket{0}-e^{i \theta_j}\ket{1} \right)  \right]$
    and publicly broadcasts the result $Y_j=\{0,1\}$.\\
    \STATE The state passes the verification test when the following condition is satisfied: if the sum of the randomly chosen angles is an even multiple of $\pi$, there must be an even number of 1 outcomes for $Y_j$, and if the sum is an odd multiple of $\pi$, there must be an odd number of 1 outcomes for $Y_j$ : $ \bigoplus_j Y_j = \frac{1}{\pi} \sum_i \theta_i $.
\end{algorithmic}
\end{algorithm}

With \textsf{Voting} a voter can express their preferred candidate. The state that will be used for voting is equivalent to a GHZ state up to a local Hadamard transform applied by each agent to their own particle. Once the GHZ state is measured in the Hadamard basis, the outcomes will always sum up to $0 \mod 2$. This can be seen by direct application of the $N$-dimensional Hadamard $\mathcal{H}^{\otimes^N}=\mathcal{H}_1\otimes\mathcal{H}_2\otimes...\otimes\mathcal{H}_N$, where each of the transforms $\mathcal{H}_j$ acting on the 2-dimensional Hilbert space of the $j$-th voter's particle is expressed in the computational basis as
\[
\mathcal{H}_j=\frac{1}{\sqrt{2}}\left[(\ket{0}_j+\ket{1}_j)\bra{0}_j + (\ket{0}_j-\ket{1}_j)\bra{1}_j \right].
\]
It is easy to show that if we apply the Hadamard to the GHZ state we obtain:

{\footnotesize
\begin{align*}
     \mathcal{H}^{\otimes^N} \ket{GHZ}=2^{-N}\left[\bigotimes_{i=1}^N\left(\ket{0}_i+\ket{1}_i\right)+\bigotimes_{i=1}^N\left(\ket{0}_i-\ket{1}_i\right)\right]=\\
     =2^{-N}\left[\sum_{\{ k_i = 0,1 \}_{i=1}^N} \ket{k_i}^{\otimes_i^N}+\sum_{\{ k_i = 0,1 \}_{i=1}^N} (-1)^{\sum k_i}\ket{k_i}^{\otimes_i^N} \right]=\\
     =2^{-N/2}\sum_{\sum k_i = 0_{\text{mod} 2}} \ket{k_i}^{\otimes_i^N}.
\end{align*}
}

So, by measuring each particle in the Hadamard basis, we are assured that the sum of the outcomes will be 0 modulo 2.

\begin{algorithm}[H]
\caption{\textsf{Voting}}
\begin{flushleft}
\textit{Input:} Voting agent preference $v_k$. \\
\textit{Output: } All agents get one row of the bulletin board.\\
\textit{Resources: }  Classical communication, GHZ source, quantum channels.\\
\textit{Description: }
\end{flushleft}
\begin{algorithmic}[1]
    \STATE Each agent measures the state they received in the Hadamard basis and records the outcome.\\
    \STATE The outcomes of the measurement of each voter $k$ is $d_k$. Then we know that $\sum_k d_k = 0 \mod 2$.\\
    \STATE The voting agent performs an XOR between the outcome $d_k$ and their vote $v_k$:  $d_k\rightarrow B_k= d_k\oplus v_k$. However, this alone will still appear as a random string.\\
    \STATE Every agent publicly broadcasts $d_k$  which gives one line $\mathbf{b_k}$ of the bulletin board  $\mathbf{B}=\{\mathbf{b_k}\}$.
\end{algorithmic}
\end{algorithm}

\section{Appendix B: Proof of Theorem \ref{theo1}} \label{appendixTheorem1}

Here we prove the soundness of the \textsf{Verification} protocol. For simplicity of the proof, recall that we denote the ideal state by $\ket{\Phi_0^n}$,
which can be obtained from the GHZ state by applying a Hadamard and a phase shift $\sqrt{Z}$ to each qubit.
\begin{manualtheorem}{1}
Let $C_{\epsilon}$ be the event that the protocol does not abort and the state used for \textsf{Voting} is such that $F(\ket{\psi}, \ket{GHZ})\leq \sqrt{1-\epsilon^2}$, for some $\epsilon>0$. Then,
\begin{equation}
    \label{eq:thm1}
    P(C_{\epsilon})\leq e^{-\tfrac{ 2^{M}(\epsilon^2-4\delta)^2}{16N\epsilon^2}},
\end{equation}
where  $\delta$ is the threshold for the ratio of rejections over trials above which the protocol is aborted, $M$ is the number of coins the agent has to toss to choose between \textsf{Verification} and \textsf{Voting} and $N$ is the number of agents.
\end{manualtheorem}

\begin{proof}


During the protocol, each voter can trust only themselves as they do not know who could be a colluding agent. Thus, although at each round of \textsf{Verification} a verifier is chosen at random and could be an honest voter, we will perform the following analysis assuming we are in the worst case scenario in which the voting agent is the only honest voter and cannot trust anybody else. Thus the average number of rounds of the \textsf{Verification} will be $\langle D \rangle=2^M/N$. In addition, if we take $M$ large enough, we can make the probability of having at least $D=2^M/2N$ rounds of \textsf{Verification}, arbitrarily close to 1. 
Thus, in the following we will assume that $D\geq 2^{M}/2N$. In any practical implementation of the protocol, however, the other honest agents will also assist the verification and if they count a ratio of rejections larger than $\delta$ they can abort the elections, increasing the soundness of the protocol.

Although we allow the malicious source to create any state in any round and even entangle the states between rounds, the optimal cheating strategy, which maximizes the probability of the event $C_\epsilon$, is to create in each round some pure state $\ket{\Psi}$ such that $F'(\ket{\Psi}) = \sqrt{1-\epsilon^2}$, as proven in \cite{pappa2012}. In high level, one can first see that an entangled strategy does not help, as it can be replaced by a strategy sending unentangled states as follows. Given some entangled state, for a given round, the probability of passing the test and the fidelity of the state depend only on the reduced state, conditioned on passing previous rounds. The same effect is achieved by sending these mixed reduced states corresponding to each round, without any entanglement.

Next, one sees that by providing a mixed state, the source does not gain any advantage, as a mixed state is a probabilistic mixture of pure states, and the overall cheating probability of this mixed strategy is just a weighted combination of the cheating probabilities of each of the pure states. Then, obviously this mixed strategy is worse than the strategy that always sends the pure state that has the maximum cheating probability of all states in the mixture. Hence, one can continue the proof by only considering strategies with pure states.

Moreover, since the adversary is just trying to maximize the probability the state $\ket{\Psi}$ used for voting has $F'(\ket{\Psi})= \sqrt{1 - \epsilon^2}$, it is clear that there is no need to send any state with even smaller $F'(\ket{\Psi})$, since then the probability of failing the test (and therefore the protocol aborting) would just increase. Last, if in any round the source created a state with higher $F'(\ket{\Psi})$, then this certainly does not contribute to the event $C_\epsilon$, and in fact it may also cause the protocol to abort. Thus, to upper-bound the probability of event $C_\epsilon$ with respect to the best attack a malicious source can perform, we only need to consider the case where in each round the malicious source creates some state $\ket{\Psi}$ such that $F'(\ket{\Psi}) = \sqrt{1-\epsilon^2}$.

The protocol takes as input a threshold parameter $\delta$, such that if during their round the voting agent rejects the state more than a $\delta$ fraction, 
then they abort the elections because the source is corrupted. In the limit, the ratio of rejections will tend to the probability of a single state $\epsilon$-far in trace distance from a GHZ to fail the \textsf{Verification} test in the presence of dishonest adversaries, which is \cite{pappa2012}:
\begin{equation}
    P(\epsilon)\geq \frac{\epsilon^{ 2}}{4}.
\end{equation}

 
Thus, we can use a Chernoff inequality to bound the probability that in $D$ rounds of \textsf{Verification} with a state $\epsilon$-far the ratio of rejections of the voting agent $\delta_{k}$ is smaller than $\delta$, in which case the event $C_{\epsilon}$ is true. In particular, given that the expected number of rejections is at least $D \epsilon^2/4$, the Chernoff bound gives the following inequality

\begin{equation}\label{eq:chernoff}
    P(C_{\epsilon})=P(\delta_{k} \leq \delta)\leq e^{-\tfrac{ D(\epsilon^2-4\delta)^2 }{8\epsilon^2}}.
\end{equation}


If we substitute $D\geq 2^{M-1}/N$, we obtain the expression of Theorem \ref{theo1}.



\end{proof}

\section{Appendix C: Proof of Theorem \ref{theo2}} \label{appendixTheorem2}
Next, we prove the anonymity of the protocol as in \cite{unnikrishnan2019}. Once again, recall that we denote the ideal state by $\ket{\Phi_0^n}$,
which can be obtained from the GHZ state by applying a Hadamard and a phase shift $\sqrt{Z}$ to each qubit. The voter's transformation now becomes $\sigma_x \sigma_z$.
Further, we also define the state:
\begin{align}
 \ket{\Phi_1^n}  = \frac{1}{\sqrt{2^{n-1}}} \Big[ \underset{\Delta(y) = 1 \text{ (mod 4)}}{\sum} \ket{y} - \underset{\Delta(y) = 3 \text{ (mod 4)}}{\sum} \ket{y} \Big],
 \end{align}
and note that $\sigma_x \sigma_z \ket{\Phi_0^n} = \ket{\Phi_1^n}, \sigma_x \sigma_z \ket{\Phi_1^n} = - \ket{\Phi_0^n}$.

We consider two cases here: first, when all the agents are honest (Lemma \ref{l:honest}), and second, when we have malicious agents who could apply some operation on their part of the state (Lemma \ref{l:dishonest}).
\renewcommand{\thetheorem}{2A}
\begin{lemma}
If all the agents are honest, and they share a state $\ket{\Psi}$ such that $F(\ket{\Psi}, \ket{\Phi_0^n}) = \sqrt{1 - \epsilon^2}$, then for every honest agent $i, j$ who could be the voter, we have that $F(\ket{\Psi_i}, \ket{\Psi_j}) \geq 1- \epsilon^2$, where $\ket{\Psi_i}$ is the state after agent $i$ has applied the voter's transformation.
\label{l:honest}
\end{lemma}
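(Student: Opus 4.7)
The plan is to exploit the symmetric action of the voter's unitary $\sigma_x \sigma_z$ on the ideal state: because $(\sigma_x \sigma_z)_k \ket{\Phi_0^n} = \ket{\Phi_1^n}$ independently of the index $k$, in the perfectly ideal case all honest voters produce identical post-transformation states, $\ket{\Psi_i} = \ket{\Psi_j} = \ket{\Phi_1^n}$. Robustness, namely the assertion of the lemma, should then follow directly from the unitarity of the local transformations together with the bound $F(\ket{\Psi},\ket{\Phi_0^n}) = \sqrt{1-\epsilon^2}$.

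First I would write $\ket{\Psi} = \alpha \ket{\Phi_0^n} + \beta \ket{\phi}$ with $\ket{\phi} \perp \ket{\Phi_0^n}$ and $|\alpha|^2 = 1 - \epsilon^2$, $|\beta|^2 = \epsilon^2$. Using unitarity of the voter transformations, I would rewrite the target overlap as $\langle \Psi_i | \Psi_j \rangle = \langle \Psi | U_{ij} | \Psi \rangle$, where $U_{ij} := (\sigma_x \sigma_z)_i^\dagger (\sigma_x \sigma_z)_j$. The crucial structural fact is that $U_{ij} \ket{\Phi_0^n} = \ket{\Phi_0^n}$, i.e.\ $\ket{\Phi_0^n}$ is a $+1$-eigenvector of the unitary $U_{ij}$. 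This follows by applying $(\sigma_x \sigma_z)_j$ to $\ket{\Phi_0^n}$ to land on $\ket{\Phi_1^n}$, and then using $(\sigma_x \sigma_z)^\dagger = -\sigma_x \sigma_z$ together with $\sigma_x \sigma_z \ket{\Phi_1^n} = -\ket{\Phi_0^n}$ from the excerpt.

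Since $U_{ij}$ stabilizes $\ket{\Phi_0^n}$, its orthogonal complement is also preserved by unitarity, so $\bra{\Phi_0^n} U_{ij} \ket{\phi} = 0 = \bra{\phi} U_{ij} \ket{\Phi_0^n}$. Substituting the decomposition of $\ket{\Psi}$ then kills the two cross terms and leaves $\langle \Psi_i | \Psi_j \rangle = (1 - \epsilon^2) + \epsilon^2 \mu$ with $\mu := \langle \phi | U_{ij} | \phi \rangle$ satisfying $|\mu| \leq 1$; the claimed fidelity bound then follows from this scalar identity by the triangle inequality. As an independent check, one can instead appeal to the triangle inequality for the trace distance, $D(\ket{\Psi_i}, \ket{\Psi_j}) \leq D(\ket{\Psi_i}, \ket{\Phi_1^n}) + D(\ket{\Phi_1^n}, \ket{\Psi_j}) = 2\epsilon$, by unitary invariance of $D$, and convert back to fidelity.

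The only non-routine step is verifying the invariance $U_{ij} \ket{\Phi_0^n} = \ket{\Phi_0^n}$, which is essentially a sign-tracking exercise given that $(\sigma_x \sigma_z)^2 = -I$ and $(\sigma_x \sigma_z)^\dagger = -\sigma_x \sigma_z$. Once this identity is in place, everything collapses to the two-line orthogonality-plus-expansion argument above. Conceptually, this invariance is exactly the statement that on the ideal state the identity of the voter is erased, which is the structural property underlying the anonymity of the whole protocol.
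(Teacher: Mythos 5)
Your argument is, at its core, the same as the paper's: both decompose $\ket{\Psi}$ into its component along the ideal state plus an orthogonal remainder, observe that the voter transformations of agents $i$ and $j$ act identically on the ideal component while (by unitarity) preserving the orthogonal complement, and then bound the overlap $\langle \Psi_i|\Psi_j\rangle$. Your packaging via the composite unitary $U_{ij}=(\sigma_x\sigma_z)_i^\dagger(\sigma_x\sigma_z)_j$ with $U_{ij}\ket{\Phi_0^n}=\ket{\Phi_0^n}$ is cleaner than the paper's explicit expansion in the basis $\{\ket{\Phi_i^n}\}$, and the sign-tracking ($(\sigma_x\sigma_z)^\dagger=-\sigma_x\sigma_z$, $\sigma_x\sigma_z\ket{\Phi_1^n}=-\ket{\Phi_0^n}$) checks out. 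One bookkeeping discrepancy: the paper's convention is $F=\abs{\bra{\Psi}\ket{\Phi_0^n}}^2=\sqrt{1-\epsilon^2}$, so the weight on the ideal component is $\abs{\alpha}^2=\sqrt{1-\epsilon^2}$, not $1-\epsilon^2$ as you set it; this changes the constants below.

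The genuine gap is the last step, where you assert that "the claimed fidelity bound then follows \ldots by the triangle inequality." It does not. Your identity gives $\langle\Psi_i|\Psi_j\rangle=\abs{\alpha}^2+\abs{\beta}^2\mu$ with $\abs{\mu}\le 1$, and the worst case $\mu=-1$ is attainable (e.g.\ $\ket{\phi}$ supported on the $-1$ eigenspace of $U_{ij}=(\sigma_y)_i(\sigma_y)_j$, which is automatically orthogonal to $\ket{\Phi_0^n}$). With your normalization this yields only $F(\ket{\Psi_i},\ket{\Psi_j})\ge(1-2\epsilon^2)^2=1-4\epsilon^2+O(\epsilon^4)$, and with the paper's normalization $F\ge(2\sqrt{1-\epsilon^2}-1)^2$, neither of which is $\ge 1-\epsilon^2$ for small $\epsilon$. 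Your fallback via the trace-distance triangle inequality gives $D\le 2\epsilon$, hence again $F\ge 1-4\epsilon^2$, not $1-\epsilon^2$. To be fair, the paper's own proof has exactly the same looseness: it silently drops the possibly negative cross term $\sum_{i\ge 2}\epsilon_i'\epsilon_i''$ when asserting $\abs{\sqrt{1-\epsilon^2}+\epsilon_1^2+\sum_i\epsilon_i'\epsilon_i''}^2\ge 1-\epsilon^2$. So your proof establishes the lemma only up to a worse constant in front of $\epsilon^2$ (which is all that is really needed downstream, where the error is folded into $\Tilde{\epsilon}$), but as written it does not deliver the stated bound, and you should either prove the cross term cannot be maximally negative or restate the conclusion as $F\ge 1-O(\epsilon^2)$.
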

\begin{proof}
If we have $F(\ket{\Psi}, \ket{\Phi_0^n}) = \abs{\bra{\Psi}\ket{\Phi_0^n}}^2 = \sqrt{1 - \epsilon^2}$, then similarly to \cite{pappa2012} we can write the state shared by all the agents as:
\begin{align}
\ket{\Psi} = (1 - \epsilon^2)^{1/4} \ket{\Phi_0^n} + \epsilon_1 \ket{\Phi_1^n} + \sum_{i=2}^{2^n-1} \epsilon_i \ket{\Phi_i^n},
\end{align}
where $\sum_{i=1}^{2^n-1} \epsilon_i^2 = 1 - \sqrt{1-\epsilon^2}$. If agent $i$ is the voter, then they apply $\sigma_x \sigma_z$, and the state becomes:
\begin{align}
\ket{\Psi_i} = (1 - \epsilon^2)^{1/4} \ket{\Phi_1^n} - \epsilon_1 \ket{\Phi_0^n} + \sum_{i=2}^{2^n-1} \epsilon_i' \ket{\Phi_i^n}.
\end{align}
Instead, if agent $j$ is the voter and they apply $\sigma_x \sigma_z$, the state becomes:
\begin{align}
\ket{\Psi_j} = (1 - \epsilon^2)^{1/4} \ket{\Phi_1^n} - \epsilon_1 \ket{\Phi_0^n} + \sum_{i=2}^{2^n-1} \epsilon_i'' \ket{\Phi_i^n}.
\end{align}
The fidelity is then given by:
\begin{align}
F(\ket{\Psi_i}, \ket{\Psi_j}) & = \abs{\bra{\Psi_i}\ket{\Psi_j}}^2 \\
& = \abs{\sqrt{1 - \epsilon^2} + \epsilon_1^2 + \sum_{i=2}^{2^n-1} \epsilon_i' \epsilon_i''}^2 \\
& \geq 1 - \epsilon^2.
\end{align}
\end{proof}

\renewcommand{\thetheorem}{2B}
\begin{lemma}
If some of the agents are malicious, and they share a state $\ket{\Psi}$ such that $F'(\ket{\Psi}) \geq \sqrt{1-\epsilon^2}$, then for every honest agent $i, j$ who could be the voter, we have that $F(\ket{\Psi_i}, \ket{\Psi_j}) \geq 1 - \epsilon^2 $, where $\ket{\Psi_i}$ is the state after agent $i$ has applied the voter's transformation.
\label{l:dishonest}
\end{lemma}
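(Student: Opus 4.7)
The plan is to reduce Lemma 2B to Lemma 2A by treating the dishonest agents' subsystem as an effective environment that purifies the honest marginal. First I would argue that the most general strategy of the set $\mathbf{W_D}$ of malicious parties at the moment of voting amounts to applying a joint unitary $U_D$ on their $|\mathbf{W_D}|$ qubits together with a private ancilla (dilating any intermediate measurement), and that post-voting operations by $\mathbf{W_D}$ cannot alter the overlap $\langle\Psi_i|\Psi_j\rangle$, since the voter's single-qubit transformation acts on an honest qubit disjoint from the support of $U_D$. Consequently it suffices to analyze the purified state shared just before voting.

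Next, I would convert the hypothesis $F'(\ket{\Psi}) \geq \sqrt{1-\epsilon^2}$ into a concrete global overlap via Uhlmann's theorem: since $F'$ is by definition the fidelity maximum over $U_D$ between the honest marginal and $\ket{\Phi_0^H}$, there exists a unit vector $\ket{\xi}$ on the dishonest subsystem such that $|\langle\Psi\,|\,\Phi_0^H\otimes\xi\rangle|^2 \geq \sqrt{1-\epsilon^2}$. Choosing an orthonormal basis for the full space whose honest factor is the Pappa et al.\ basis $\{\ket{\Phi_k^H}\}_k$ and whose dishonest factor extends $\ket{\xi}$, I would expand
\begin{equation*}
\ket{\Psi} = (1-\epsilon^2)^{1/4}\ket{\Phi_0^H}\ket{\xi} + \ket{\Phi_1^H}\ket{\xi_1} + \sum_{k\geq 2}\ket{\Phi_k^H}\ket{\xi_k} + \ket{\Phi_0^H}\ket{\xi_0^{\perp}},
\end{equation*}
where $\ket{\xi_0^\perp}\perp\ket{\xi}$ and the total squared norm of the tail vectors equals $1-\sqrt{1-\epsilon^2}$. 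This is the exact analogue of the decomposition used in Lemma 2A, with the dishonest register riding along as a passive factor.

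The third step is to apply the voter's operation $\sigma_x\sigma_z$ on an honest qubit $i$ or $j$. Because $\sigma_x\sigma_z$ acts only on the honest subsystem and obeys $\sigma_x\sigma_z\ket{\Phi_0^H} = \ket{\Phi_1^H}$, $\sigma_x\sigma_z\ket{\Phi_1^H} = -\ket{\Phi_0^H}$ irrespective of which honest qubit hosts it, the leading contributions to $\ket{\Psi_i}$ and $\ket{\Psi_j}$ coincide and equal $(1-\epsilon^2)^{1/4}\ket{\Phi_1^H}\ket{\xi} - \ket{\Phi_0^H}\ket{\xi_1}$. The residual $\ket{\Phi_k^H}$ components for $k\geq 2$ and the $\ket{\xi_0^\perp}$ piece are permuted differently in $i$ versus $j$, but Cauchy--Schwarz applied against the norm budget $1-\sqrt{1-\epsilon^2}$ bounds their contribution to $\langle\Psi_i|\Psi_j\rangle$, exactly as in Lemma 2A. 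Collecting terms yields $F(\ket{\Psi_i},\ket{\Psi_j})\geq 1-\epsilon^2$.

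The main obstacle is justifying the Uhlmann-style reduction rigorously: one must check that the worst-case adversary is faithfully captured by a single unitary $U_D$ on a well-defined ancilla (so that the shared state can be taken pure), and that the supremum defining $F'$ is attained — or approached by a sequence — so that the witness $\ket{\xi}$ is actually available for the decomposition. Once that bookkeeping is in place, the rest is essentially a copy of the honest-case inner-product computation of Lemma 2A, carried out in the enlarged Hilbert space $\mathcal{H}_H \otimes \mathcal{H}_D$.
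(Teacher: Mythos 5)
There is a genuine gap at the heart of your argument: the Uhlmann-style reduction in your second step is not valid, and the decomposition you build on it does not exist for the states in question. The paper defines $F'(\ket{\Psi}) = \max_U F(U\ket{\Psi},\ket{\Phi_0^n})$, where $\ket{\Phi_0^n}$ is the \emph{full} $n$-party ideal state, which is entangled across the honest/dishonest cut: $\ket{\Phi_0^n}=\tfrac{1}{\sqrt{2}}\big[\ket{\Phi_0^k}\ket{\Phi_0^{n-k}}-\ket{\Phi_1^k}\ket{\Phi_1^{n-k}}\big]$. It is not ``the fidelity between the honest marginal and $\ket{\Phi_0^H}$.'' Consequently, a state with large $F'$ cannot have a dominant product component $\ket{\Phi_0^H}\otimes\ket{\xi}$: even the ideal state itself satisfies $\abs{\braket{\Phi_0^n}{\Phi_0^k\otimes\xi}}^2\le 1/2$ for every unit vector $\ket{\xi}$ on the dishonest register, so your claimed witness with $\abs{\braket{\Psi}{\Phi_0^H\otimes\xi}}^2\ge\sqrt{1-\epsilon^2}$ does not exist for any $\epsilon<\sqrt{3}/2$, i.e.\ in precisely the regime of interest. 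The expansion with leading term $(1-\epsilon^2)^{1/4}\ket{\Phi_0^H}\ket{\xi}$ and a tail of squared norm $1-\sqrt{1-\epsilon^2}$ is therefore unavailable, and the subsequent ``copy of Lemma 2A'' collapses; Lemma 2A could use that form only because there all $n$ qubits are honest and the target is genuinely a single basis vector of the expansion.

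The fix --- and what the paper actually does --- is to keep \emph{both} honest components with unnormalized dishonest-side vectors: write $\ket{\Psi'}=\ket{\Phi_0^k}\ket{\psi_0}+\ket{\Phi_1^k}\ket{\psi_1}+\ket{\chi}$, where $\ket{\psi_0},\ket{\psi_1}$ each carry weight roughly $1/2$ and $\ket{\chi}$ is the part whose honest factor is orthogonal to $\ket{\Phi_0^k}$ and $\ket{\Phi_1^k}$. The voter's $\sigma_x\sigma_z$ sends this two-dimensional ``good'' block to $\ket{\Phi_1^k}\ket{\psi_0}-\ket{\Phi_0^k}\ket{\psi_1}$ identically for any honest voter, so only the residual $\ket{\chi}$ is voter-dependent; the hypothesis $F'\ge\sqrt{1-\epsilon^2}$ is then converted, via the entangled decomposition of $\ket{\Phi_0^n}$ above together with the triangle and Cauchy--Schwarz inequalities, into $\braket{\psi_0}{\psi_0}+\braket{\psi_1}{\psi_1}\ge\sqrt{1-\epsilon^2}$ and hence $\abs{\braket{\chi'}{\chi''}}\le 1-\sqrt{1-\epsilon^2}$. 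Your opening reduction (dilating the adversary to a unitary on their qubits plus ancilla, and noting that post-voting operations cannot change the overlap) is fine and consistent with the paper, but the core inner-product computation must be carried out against the entangled target, not a product one. Incidentally, the paper's own computation lands at $1-\epsilon^2-\epsilon^4=1-\Tilde{\epsilon}^2$ rather than the $1-\epsilon^2$ stated in the lemma, so the bound you announce is stronger than what either route actually delivers.
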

\begin{proof}
Recall that our fidelity measure is given by $F'(\ket{\Psi}) = \underset{U}{\max \ } F(U\ket{\Psi}, \ket{\Phi_0^n})$. Let us now denote by $\ket{\Psi'}=U \ket{\Psi}$ the state after the operation $U$ which maximizes this fidelity has been applied. As in \cite{pappa2012}, we can write this state in the most general form as:
\begin{align}
\ket{\Psi'} = \ket{\Phi_0^k} \ket{\psi_0} + \ket{\Phi_1^k} \ket{\psi_1} + \ket{\chi},
\end{align}
where note that $\ket{\chi}$ contains both honest and malicious parts, of which the honest part is orthogonal to both $\ket{\Phi_0^k}$ and $\ket{\Phi_1^k}$.

We want to find the closeness of the states $\ket{\Psi_i}, \ket{\Psi_j}$, which are the states after the $\sigma_x \sigma_z$ operation is applied to $\ket{\Psi'}$ by either agent $i$ or $j$ who is the voter.
These states are given by:
\begin{align}
\ket{\Psi_i} & = \ket{\Phi_1^k} \ket{\psi_0} - \ket{\Phi_0^k} \ket{\psi_1} + \ket{\chi'}, \\
\ket{\Psi_j} & = \ket{\Phi_1^k} \ket{\psi_0} - \ket{\Phi_0^k} \ket{\psi_1} + \ket{\chi''}.
\end{align}
The fidelity is then given by:
\begin{align}
F(\ket{\Psi_i}, \ket{\Psi_j}) & = \abs{\bra{\Psi_i}\ket{\Psi_j}}^2 \\
& = \abs{\bra{\psi_0}\ket{\psi_0} + \bra{\psi_1}\ket{\psi_1} + \bra{\chi'}\ket{\chi''}}^2.
\end{align}
However, although the overall state $\ket{\Psi'}$ is normalized, the malicious agents' part of the state is not. Thus, we need to determine a bound on $\bra{\psi_0}\ket{\psi_0}$ and $\bra{\psi_1}\ket{\psi_1}$. We have:
\begin{align}
F( \ket{\Psi'}, \ket{\Phi_0^n}) =  \abs{\bra{\Phi_0^n}\ket{\Psi'}}^2  \geq \sqrt{1 - \epsilon^2}.
\end{align}
It was shown in \cite{pappa2012} that we can write for any $k, n$:
\begin{align}
\ket{\Phi_0^n} = \frac{1}{\sqrt{2}} \Big[ \ket{\Phi_0^k} \ket{\Phi_0^{n-k}} - \ket{\Phi_1^k} \ket{\Phi_1^{n-k}} \Big],
\end{align}
and using this, we get:
 \begin{align}
 \frac{1}{2} | & (\bra{\Phi_0^{n-k}}\ket{\psi_0})^2 + (\bra{\Phi_1^{n-k}}\ket{\psi_1})^2 \nonumber \\
 & - 2 \bra{\Phi_0^{n-k}}\ket{\psi_0} \bra{\Phi_1^{n-k}}\ket{\psi_1} |  \geq \sqrt{1 - \epsilon^2}.
 \end{align}
Using the triangle inequality, we have:
 \begin{align}
  \frac{1}{2} \Big[ \abs{\bra{\Phi_0^{n-k}}\ket{\psi_0}}^2 & + \abs{\bra{\Phi_1^{n-k}}\ket{\psi_1}}^2  \Big]  \geq \sqrt{1 - \epsilon^2 }.
\end{align}
Using the Cauchy-Schwarz inequality, we have:
\begin{align}
\bra{\psi_0}\ket{\psi_0} + \bra{\psi_1}\ket{\psi_1} & \geq \abs{\bra{\Phi_0^{n-k}}\ket{\psi_0}}^2  + \abs{\bra{\Phi_1^{n-k}}\ket{\psi_1}}^2 \\
& \geq \sqrt{1 - \epsilon^2}.
\end{align}
Since the overall state $\ket{\Psi'}$ is normalized, we have $|\bra{\chi'}\ket{\chi''} |\leq 1 - \sqrt{1 - \epsilon^2}$. Thus, we get our expression for fidelity as:
 \begin{align}
 \nonumber F(\ket{\Psi_i}, \ket{\Psi_j}) & =  \abs{\bra{\psi_0}\ket{\psi_0} + \bra{\psi_1}\ket{\psi_1} + \bra{\chi'}\ket{\chi''}}^2 \\
\nonumber  &=\left(\abs{\bra{\psi_0}\ket{\psi_0}|-| \bra{\psi_1}\ket{\psi_1} + \bra{\chi'}\ket{\chi''}}\right)^2  \\
 & \geq 1 - \epsilon^2-\epsilon^4=1-\Tilde{\epsilon}^2,
 \end{align}
 where $\Tilde{\epsilon}=\sqrt{\epsilon^2+\epsilon^4}$.
 
\end{proof}

We are now ready to prove Theorem \ref{theo2}.

\begin{manualtheorem}{2}
At any round $\ell \in [N]$ with voting agent $k$ (who has unique index $\omega_k = \ell$), if the agents use a state $\ket{\psi}$ such that $F(\ket{\psi}, \ket{GHZ})\geq \sqrt{1-\epsilon^2}$ to perform \textsf{Voting}, then for the optimal strategy that any subset of malicious agents $\mathcal{D}$ can use to guess the identity of the voting agent $k$ correctly, we have
\begin{equation}\label{eq:anon}
    \forall j \in W_H,\: \Pr[ \mathcal{D} \emph{ guess } j] = \begin{cases}
\tfrac{1}{H}+\Tilde{\epsilon} & \text{for $j=k$}\\
\tfrac{1-\Tilde{\epsilon}}{H} & \text{for $j \neq k$},
\end{cases}
\end{equation}
where $W_H$ is the set and $H$ the number of honest agents.
\label{th:anon}
\end{manualtheorem}

\begin{proof}
We will now show that if the agents share close to the GHZ state, then the voter remains anonymous.
From Theorem \ref{theo1}, we saw that the probability that the state used for voting satisfies $F'(\ket{\Psi}) \leq \sqrt{1 - \epsilon^2}$ is given by $
\text{Pr}[C_\epsilon] \leq \eta$
for the honest agents, where $\eta$ depends on the number of runs of the verification protocol. Thus, by doing enough runs, we can make this very small, and so we have that the state used for voting will be close to the GHZ state, as given by $F'(\ket{\Psi}) \geq \sqrt{1 - \epsilon^2}$.

From the previous proof, we see that if $F'(\ket{\Psi}) \geq \sqrt{1-\epsilon^2}$, the distance between the states if agent $i$ or $j$ was the voter is $D(\ket{\Psi_i}, \ket{\Psi_j}) \leq \Tilde{\epsilon}$. A malicious agent who wishes to guess the identity of the voter would make some sort of measurement to do so. Thus, we wish to find the maximum success probability of a measurement that could distinguish between the $H$ states that are the result of the voter (who can only be an honest agent) applying the $\sigma_x \sigma_z$ transformation.

The success probability of discriminating between $H$ states is given by $ \sum_{i=1}^H p_i \text{Tr} (\Pi_i \rho_i)$. From Lemma \ref{l:dishonest}, we know that the distance between any two states after the voter's transformation is upper-bounded by $\Tilde{\epsilon}$. Thus, if we take $\ket{\alpha} = \ket{\Psi_j}$, then we know that any of these $H$ states is of distance $\Tilde{\epsilon}$ away from this same state $\ket{\alpha}$.

For any POVM element $P$, we can write the trace distance between two states $\rho, \sigma$ as $
 \text{Tr} \big[ P (\rho - \sigma) \big] \leq D(\rho, \sigma)$.
Thus, we have for a POVM element $\Pi_i$ and for states $\ket{\Psi_i}, \ket{\alpha}$:
\begin{align}
 \text{Tr} (\Pi_i \ket{\Psi_i}\bra{\Psi_i}) - \text{Tr} (\Pi_i \ket{\alpha}\bra{\alpha})  \leq \Tilde{\epsilon}.
\end{align}
Assuming that each honest agent has an equal chance of becoming the voter, the probability that the malicious agents can guess the identity of the voter is bounded by:
\begin{align}
\text{Pr}[\text{guess}]
& = \sum_{i=1}^H \frac{1}{H} \text{Tr} (\Pi_i \ket{\Psi_i}\bra{\Psi_i})
\\
& \leq \frac{1}{H} \sum_{i=1}^H \Big[ \text{Tr} (\Pi_i \ket{\alpha}\bra{\alpha}) + \Tilde{\epsilon}
 \Big]  \\
& = \frac{1}{H} \text{Tr}\Big[\sum_{i=1}^H \Pi_i \ket{\alpha}\bra{\alpha}\Big] + \frac{1}{H} H  \Tilde{\epsilon}  \\
& = \frac{1}{H} \text{Tr} (\ket{\alpha}\bra{\alpha}) + \Tilde{\epsilon} \\
& = \frac{1}{H} + \Tilde{\epsilon}.
\end{align}

As we said, the probabilities for the other agents come from the fact that all the agents that are not voting perform exactly the same transformation on the state, so it is impossible for the dishonest parties to distinguish between them, hence the probability of guessing their identity is the same.

\end{proof}

\section{Appendix D: Proof of Theorem 3} \label{appendixTheorem3}

\begin{manualtheorem}{3}
If at round $\ell$ the agents are honest and use a state $\ket{\psi}$ such that $F(\ket{\psi}, \ket{GHZ})\geq \sqrt{1-\epsilon^2}$ to perform \textsf{Voting}, then the probability that there is an error in the tally in the $\ell$-th round is upper bounded by $\epsilon$,
\begin{equation}
P^{er}_{\ell}\leq \epsilon.
\end{equation}
\end{manualtheorem}

\begin{proof}
At each round, only one vote is declared. The state $\ket{\psi}$ maximizing this probability can be at most $\epsilon$-far in trace distance. Knowing that $Tr[\Pi(\rho-\tau)]\leq D(\rho,\tau)$ for any POVM $\Pi$, the probability of having one error using the state $\rho=\ket{\psi}\bra{\psi}$, instead of the correct state $\tau=\ket{GHZ}\bra{GHZ}$ is
\[
P^{er}_{\ell}=Tr[\Pi_{\ell}\mathcal{H}^{\otimes^N}\rho]\leq Tr[\Pi_{\ell}\mathcal{H}^{\otimes^N}\tau] + D(\rho,\tau)=\epsilon,
\]
where $\Pi_{\ell}$ is some operator that evaluates the distance from the correct $\ell$-th output of the state measured in the Hadamard basis, $\mathcal{H}^{\otimes^N}$ is the product of local Hadamard applied by each voter  and we used the fact that if we measure the correct state it is impossible to have an error.
\end{proof}

\section{Appendix E: Additional candidates} \label{appendixAdditionalCandidates}
If there are $K$ candidates, each candidate identifier will have $\log_2 K$ digits and each election can provide the preference for at most 1 digit of each voter. If we repeat the whole protocol $\log_2 K$ times, keeping the same secret index for each voter at all times, we end up with a greater election votes vector $\mathbf{\underline{E}}=\mathbf{E}^{(1)}\mathbf{E}^{(2)}...\mathbf{E}^{(\log_2 K)}$ formed by the election vote vector of each election by summing the row of the corresponding bulletin board $\mathbf{E}^{(i)}=\sum_k \mathbf{B}^{(i)}$. So the sub-election 1 will result in a vector $\mathbf{E}^{(1)}=e^{(1)}_{\omega_1} e^{(1)}_{\omega_2} ... e^{(1)}_{\omega_N}$, where  $e^{(1)}_{\omega_k}$ is the value of the first digit of the preference of voter $k$, with secret index $\omega_k$, and so on for all the other sub-elections.
If we want to perform an election with 3 candidates and 7 voters, allowing also the possibility of a null vote, which will be candidate $(0,0)$, we need to carry out 2 sub-elections, and result in the following table:
\[
\mathbf{\underline{E}}=\begin{pmatrix}
0 & 0\\
1 & 0 \\
1 & 1\\
0 & 0\\
0 & 0\\
1 & 1 \\
0 & 1
\end{pmatrix},
\]
where two agents voted for candidate 3, candidates 1 and 2 received one vote each and the rest of the voters decided not to express a preference.

\end{document}